\documentclass[12pt]{article}
\usepackage{hyperref}
\usepackage{float}
\usepackage{amsmath}
\usepackage{graphicx,psfrag,epsf}
\usepackage{enumerate}
\usepackage{natbib}
\usepackage{multirow}
\usepackage[ruled,vlined]{algorithm2e}
\usepackage{array}
\usepackage{xcolor}
\allowdisplaybreaks

\usepackage{graphicx}
\usepackage[utf8 ]{inputenc}
\usepackage[english]{babel}
\usepackage{amsthm}
\usepackage[bottom]{footmisc}
\usepackage{longtable}
\usepackage{mathrsfs}
\usepackage{extarrows}
\usepackage{bigints}
\usepackage{subcaption}
\usepackage{setspace}

\usepackage[top=0.85in, bottom=0.85in, left=0.85in, right=0.85in]{geometry}
\usepackage{amsmath}
\usepackage{amsfonts}

%For landscape layout.
%\usepackage[T1]{fontenc}
\usepackage{rotating}
\usepackage{graphicx} 
\usepackage{threeparttable, tablefootnote}
\usepackage{hyperref}

%\doublespacing

\newtheorem{theorem}{Theorem}[section]

\newtheorem{proposition}[theorem]{Proposition}

\theoremstyle{definition}

\newtheorem{definition}{Definition}[section]

\DeclareTextFontCommand{\textmyfont}{\footnotesize}

\setlength{\LTcapwidth}{7in}

%\doublespacing

\begin{document}
%	\def\spacingset#1{\renewcommand{\baselinestretch}%
%{#1}\small\normalsize} 

%\spacingset{1.5} 	

\doublespacing
	
	\title{\bf A multivariate heavy-tailed integer-valued GARCH process with EM algorithm-based inference}

	\author{Yuhyeong Jang$^\star$\footnote{Email: \href{mailto:yuhyeongj@smu.edu}{yuhyeongj@smu.edu}}\,\,\, \, Raanju R. Sundararajan$^\star$\footnote{Email: \href{mailto:rsundararajan@smu.edu}{rsundararajan@smu.edu}}\,\,\, \, Wagner Barreto-Souza\footnote{Email: \href{mailto:wagner.barreto-souza@ucd.ie}{wagner.barreto-souza@ucd.ie}} \\
	$^\star${\it \small Department of Statistical Science, Southern Methodist University,  Dallas, Texas, USA} \\
	$^\ddag${\it \small School of Mathematics and Statistics, University College Dublin, Dublin, Ireland}}

\vspace{0.5cm}

	\date{}

	\maketitle
	
	\begin{abstract}
		A new multivariate integer-valued Generalized AutoRegressive Conditional Heteroscedastic process based on a multivariate Poisson generalized inverse Gaussian distribution is proposed.  The estimation of parameters of the proposed multivariate heavy-tailed count time series model via maximum likelihood method is challenging since the likelihood function involves a Bessel function that depends on the multivariate counts and its dimension. As a consequence, numerical instability is often experienced in optimization procedures. To overcome this computational problem, two feasible variants of the Expectation-Maximization (EM) algorithm are proposed for estimating parameters of our model under low and high-dimensional settings. These EM algorithm variants provide computational benefits and help avoid the difficult direct optimization of the likelihood function from the proposed model. 
		Our model and proposed estimation procedures can handle multiple features such as modeling of multivariate counts, heavy-taildness, overdispersion, accommodation of outliers, allowances for both positive and negative autocorrelations, estimation of cross/contemporaneous-correlation, and the efficient estimation of parameters from both statistical and computational points of view. Extensive Monte Carlo simulation studies are presented to assess the performance of the proposed EM algorithms. An application to modeling bivariate count time series data on cannabis possession-related offenses in Australia is discussed.\\
		
		\noindent {\bf MOS subject Classification}. Primary:  62M09; 62M10.\\
		
		\noindent {\bf Keywords and phrases}: Cross-correlation; EM algorithm; Heavy tail; Monte Carlo method; Multivariate count times series.\\
		\end{abstract}

	\section{Introduction}
	\label{sec:intro}
	
\noindent In recent years, modeling multivariate count time series data has been receiving increasingly more attention from social scientists as well as researchers from other disciplines. Multivariate count time series data can be commonly found in many fields such as economics, sociology, public health, finance and environmental science. Examples include weekly number of deaths from a disease observed across multiple geographical regions \citep{Paul2008}, high-frequency bid and ask volumes of a stock \citep{Pedeli2013}, and annual numbers of major hurricanes in the North Atlantic and North Pacific Basins \citep{Livsey2018}, to name a few. Many of these count time series examples exhibit heavy-tailed behavior and this poses a non-trivial modeling challenge.

Several methods exist in the literature for modeling univariate count time series data. Among them is the 
INteger-valued Generalized AutoRegressive Conditional Heteroscedastic (INGARCH) approach; see \cite{ferland06}, \cite{fokianos2009}, \cite{fokianos2011} for some examples.  \cite{wagner19} is a recent work that considers a class of INGARCH models using mixed Poisson (MP) distributions. One of their  proposed models, the Poisson inverse-Gaussian INGARCH process, viewed as an alternative to the negative binomial model,  is capable of handling heavy-tailed behavior in count time series, and is also robust to the presence of outliers.  Other existing models to handle univariate heavy-tailed time series of counts are due to \cite{bar2019}, \cite{gor2020}, and \cite{Qian2020}.

Following a number of studies on univariate models for time series of counts, some researchers have made efforts to generalize univariate models to bivariate and multivariate cases. As examples, \cite{pedeli2011} proposed a bivariate INteger-valued AutoRegressive (INAR) model and \cite{fokianos2020} introduce a multivariate count autoregression by combining copulas and INGARCH approaches. Readers are  referred to  \cite{davis2021} and \cite{fokianos2022} for a more detailed review. Bivariate INGARCH models have been proposed and explored by \cite{cuizhu2018}, \cite{leeetal2018}, \cite{cuietal2020}, and \cite{pianetal2023}.

Very limited research is available to address the problem of modeling multivariate count time series that exhibit heavy-tailed behavior. Such a feature is commonly seen in many real-data situations such as numbers of insurance claims for different types of properties \citep{chen2023}, and high-frequency trading volume in the financial market \citep{Qian2020}, to name a few.

	The primary aim of this paper is to introduce a novel multivariate time series model for handling heavy-tailed counts. We consider a Poisson generalized inverse-Gaussian distribution for the count vector given the past thereby allowing for the presence of heavy-taild behavior and outliers. Moreover, the model construction follows a log-linear INGARCH structure, which enables us to account for both negative and positive autocorrelations. We refer to our model as multivariate Poisson generalized inverse-Gaussian  INGARCH (MPGIG-INGARCH) process. A challenging point is that the resulting likelihood function of our multivariate count time series model depends on the Bessel function, posing difficulties and instability in estimating parameters via the conditional maximum likelihood method. Furthermore, as the dimension, denoted by $p$, increases, the number of parameters (equal to $2p^2+p+2$) grows fast. Hence, a direct optimization of the likelihood is computationally problematic, especially when the dimension grows. We properly address this issue by proposing effective Expectation-Maximization algorithms.
	
	 The main contributions of this paper are highlighted in what follows. 

\begin{itemize}

\item[(i)] The MPGIG-INGARCH process and the proposed estimation procedures can handle multiple features such as modeling of multivariate counts, heavy-taildness, overdispersion, accommodation of outliers, allowances for both positive and negative autocorrelations, estimation of cross/contemporaneous-correlation, and the efficient estimation of parameters from both statistical and computational points of view. This is the first model capable of simultaneously fulfilling all these functionalities.

\item[(ii)] The stochastic representation of our model enables the use of an Expectation-Maximization (EM) algorithm for efficiently estimating the parameters. This helps avoid computational challenges that come with the direct optimization of the likelihood function of the MPGIG-INGARCH model.

\item[(iii)] The development of a new Generalized Monte Carlo Expectation-Maximization (GMCEM) algorithm  that is computationally tractable and stable. For the larger dimensional cases, we also propose a new hybrid version of the GMCEM algorithm that combines the EM approach and quasi-maximum likelihood estimation (QMLE). In this hybrid approach, the QMLE technique is employed to estimate the parameters controlling temporal dependence, while the GMCEM algorithm is used to estimate the parameters that control overdispersion or degree of heavy-tailedness. The performance and computational burden of the GMCEM algorithm is inspected in an extensive simulation study given in Section \ref{sec:simulations}. To the best of our knowledge, it is the first time that such a hybrid EM algorithm variant has been suggested, in particular, for dealing with count time series. We believe that this proposed algorithm can be used more generally for problems where a subset of parameters can be well-estimated via quasi-likelihood and the remaining ones can be estimated via an EM algorithm.

\item[(iv)] Our model is uniquely positioned to handle heavy-tailed behavior found in many real-data examples. The motivating application comes from monthly counts\footnote{Data source: \url{https://www.bocsar.nsw.gov.au/Pages/bocsar_datasets/Offence.aspx}} of cannabis possession-related offenses recorded in the state of New South Wales, Australia. In particular, we consider a bivariate time series of counts recorded in the Mid North Coast (MNC) and Greater Newcastle (GNC) regions of New South Wales, Australia. This count time series data is seen to exhibit heavy-tailed behavior and our analysis shows that the proposed method is uniquely positioned in effectively describing this dataset.

\end{itemize}
	
A related work to ours is due to \cite{chen2023}, where a first-order multivariate count model based on the INAR approach with innovations following a multivariate Poisson generalized inverse-Gaussian distribution is introduced. The estimation of parameters of their model is performed via the method of  maximum likelihood. It is worth mentioning that likelihood functions of INAR models are generally quite cumbersome, even in the univariate case, and this issue is exacerbated in the multivariate context, which poses challenges regarding estimation and prediction \citep{fokianos2020}. Besides this model not having a computationally feasible estimation technique, especially for higher dimensions, it does not handle all the features of our model mentioned in (i) above, for instance, allowing for negative autocorrelations. Moreover, the parameter related to the tail is fixed by  \cite{chen2023} in their model to avoid non-identifiability problems. In our case, we do not face this issue and such a parameter is efficiently estimated.
	
	The rest of this paper is organized as follows. Section \ref{sec:model} introduces the MPGIG-INGARCH model along with other necessary technical details. In Section \ref{sec:estimation}, we describe the proposed GMCEM estimation algorithm, and also the hybrid version H-GMCEM algorithm that will be used in the larger dimensional cases. The results of Monte Carlo simulations are discussed in Section \ref{sec:simulations}. The application to modeling counts of cannabis-related offenses in two regions in Australia is presented in Section \ref{sec:application}. Concluding remarks and future research are provided in Section \ref{sec:conclusion}.

\section{MPGIG$_p$-INGARCH model}
\label{sec:model}

In this section we describe our multivariate INGARCH model that is designed to handle heavy-tailed behavior in count time series data. We begin by introducing some notations and definitions that are needed followed by a description of our proposed model. 

We say that random variable $Z$ follows a generalized inverse-Gaussian (GIG) distribution if its probability density function assumes the form
	\begin{eqnarray}\label{densityGIG}
		g(z)=\dfrac{(a/b)^{\alpha/2}}{2\mathcal K_\alpha(\sqrt{ab})}z^{\alpha-1}\exp\left\{-\dfrac{1}{2}\left(az+bz^{-1}\right)\right\},\quad z>0,	
	\end{eqnarray}	
	where $a>0$, $b>0$, and $\alpha\in\mathbb R$ are parameters, and $\mathcal K_{\alpha}(z)=\dfrac{1}{2}\displaystyle\int_0^\infty u^{\alpha-1}\exp\{-z(u+u^{-1})/2\}du$ is the second-kind modified Bessel function. We denote this three-parameter distribution by $Z\sim\mbox{GIG}(a,b,\alpha)$. The bi-parameter case when $a=b=\phi$, which will be considered in this paper, is denoted by $Z\sim\mbox{GIG}(\phi,\alpha)$.  
	The multivariate Poisson generalized inverse-Gaussian distribution \citep{steinetal1987}, here denoted as MPGIG$_p$, is defined by assuming that $Y_1,\ldots,Y_p$ are conditionally independent random variables given $Z\sim\mbox{GIG}(\phi,\alpha)$, and that $Y_i|Z\sim\mbox{Poisson}(\lambda_iZ)$, for $i=1,\ldots,p$. The joint probability mass function of $Y_1,\ldots,Y_p$ is given by
	\begin{eqnarray*}
		P(Y_1=y_1,\ldots,Y_p=y_p)=\dfrac{\mathcal K_{\sum_{i=1}^py_i+\alpha}\left(\sqrt{\phi\left(2\sum_{i=1}^p\lambda_i+\phi\right)}\right)}{\mathcal K_\alpha(\phi)}\left(\prod_{i=1}^p\dfrac{\lambda_i^{y_i}}{y_i!}\right)\left(\dfrac{\phi}{2\sum_{i=1}^p\lambda_i+\phi}\right)^{\frac{\sum_{i=1}^py_i+\alpha}{2}},		
	\end{eqnarray*}	
	for $y_1,\ldots,y_p\in\mathbb N_0\equiv\{0,1,2,\ldots\}$. We denote ${\bf Y}=(Y_1,\ldots,Y_p)^\top\sim\mbox{MPGIG}_p(\boldsymbol\lambda,\phi,\alpha)$, with $\boldsymbol\lambda=(\lambda_1,\ldots,\lambda_p)^\top$. Now define $\mathcal R_{\alpha,k}(\phi)=\dfrac{\mathcal K_{k+\alpha}(\phi)}{\mathcal K_\alpha(\phi)}$,  for $k\in\mathbb N$, with $\mathcal R_{\alpha}(\phi)\equiv\mathcal R_{\alpha,1}(\phi)$. The first two cumulants and covariance of the MPGIG$_p$ distribution can be expressed by $E(Y_i)=\lambda_i\mathcal R_{\alpha}(\phi)$, $\mbox{Var}(Y_i)=\lambda_i\mathcal R_{\alpha}(\phi)+\lambda_i^2\{\mathcal R_{\alpha,2}(\phi)-\mathcal R^2_{\alpha}(\phi)\}$, for $i=1,\ldots,p$, and $\mbox{cov}(Y_i,Y_j)=\lambda_i\lambda_j\{\mathcal R_{\alpha,2}(\phi)-\mathcal R^2_{\alpha}(\phi)\}$, for $i\neq j$.
	
 We use the notation $\{{\bf Y}_t\}_{t\geq1}=\{ ( Y_{1t},\ldots,Y_{pt})\}_{t\geq1}$ to denote a $p$-variate time series of counts in what follows for defining our multivariate INGARCH model.
	
	\begin{definition} ($\mbox{MPGIG}_p\mbox{-INGARCH}$ process)
		\label{def:MPGIGINGARCH}
		We say that $\{{\bf Y}_t\}_{t\geq1}$ follows a MPGIG$_p$-INGARCH process if ${\bf Y}_t|\mathcal F_{t-1}\sim\mbox{MPGIG}_p(\boldsymbol\lambda_t,\phi,\alpha)$, with $\mathcal F_{t-1}=\sigma({\bf Y}_{t-1},\ldots, {\bf Y}_1,\boldsymbol\lambda_1)$, and $\boldsymbol\nu_t\equiv \log\boldsymbol\lambda_t$ defined componentwise and satisfying the dynamics
		\begin{eqnarray}
			\boldsymbol\nu_t={\bf d}+{\bf A}\boldsymbol\nu_{t-1}+{\bf B}\log({\bf Y}_{t-1}+{\bf 1}_p),	\quad t\geq1,
			\label{eqn::meanprocess}
		\end{eqnarray}	
		where ${\bf d}$ is a $p$ dimensional vector with real-valued components, ${\bf A}$ and ${\bf B}$ are $p\times p$ real-valued matrices, and ${\bf 1}_p$ is a $p$-dimensional vector of ones.
	\end{definition}

As an illustration, in Figure \ref{fig:simuldata}, we present plots of the simulated trajectories of a 4-dimensional MPGIG$_4$-INGARCH process with the following specifications
$$\phi=0.5,\text{ }\alpha=1.5,\text{ }
\mathbf{\rm d} = \begin{bmatrix}
	0.5 \\
	0.5\\
	1\\
	0.5
\end{bmatrix},\text{ } 
\mathbf{A} = \begin{bmatrix}
	0.35 &-0.2 &0&0\\
	0 &-0.3 &0&0\\
	0 &0 &0.4&0\\
	0 &0 &0.2&-0.3
\end{bmatrix},\text{ } 
\mathbf{B}=\begin{bmatrix}
	-0.3 &0.2 &0.0&0\\
	0 &0.3 &0&0\\
	0 &0 &-0.3&0\\
	0 &0&-0.25&0.4
\end{bmatrix}.$$

\begin{figure}[H]
  \centering
  \includegraphics[width=17cm]{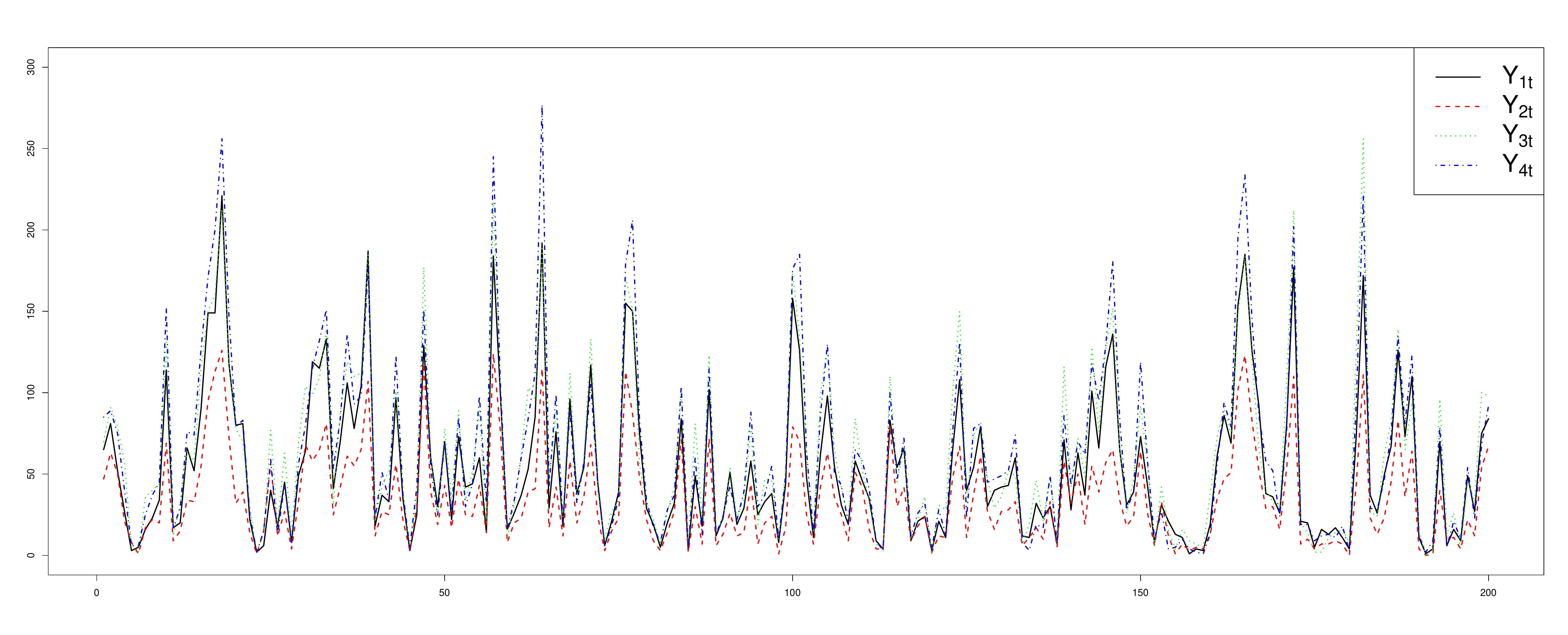}
  \caption{Simulated trajectories of a 4-dimensional MPGIG$_4$-INGARCH process with $T=200$.}
  \label{fig:simuldata}
\end{figure}

It is worth mentioning that the MPGIG$_p$-INGARCH process can accommodate both positive and negative autocorrelations due to the log-linear form assumed in (\ref{eqn::meanprocess}). The log-linear link function was previously proposed by \cite{fokianos2011} and \cite{fokianos2020} for univariate and multivariate count time series cases, respectively. In the latter work, the authors assume that the conditional distribution of each component of $\textbf{Y}_t$ follows a univariate Poisson distribution. In contrast, under our setup, the conditional distribution is MPGIG$_p$, which can handle heavy-taildness.

Going forward, we will assume that $\boldsymbol{\lambda}_{t}$ in the MPGIG$_p$-INGARCH model above is defined with the log-link function, unless mentioned otherwise.

The model parameters can be estimated by conditional maximum likelihood method. Denote the parameter vector as  $\boldsymbol{\Theta}=\left(\phi,\alpha,\boldsymbol{\theta}\right)^\top$, where $\boldsymbol{\theta}=\left(\textbf{d},\mbox{vec}(\textbf{A}),\mbox{vec}(\textbf{B})\right)^\top$.  The conditional log-likelihood function assumes the form
\begin{eqnarray}\label{eqn:pmf}
\ell(\boldsymbol{\Theta})&\equiv&\sum_{t=2}^T\log P({\bf Y}_t={\bf y}_t|\mathcal F_{t-1})\nonumber\\
&\propto&(T-1)\log\mathcal K_\alpha(\phi)+\sum_{t=2}^T\sum_{i=1}^py_{it}\log\lambda_{it}+\dfrac{\log\phi}{2}\left[\sum_{t=2}^T\sum_{i=1}^py_{it}+\alpha(T-1)\right] \nonumber \\
&&-\dfrac{1}{2}\sum_{t=2}^T\left(\sum_{i=1}^py_{it}+\alpha\right)\log\left(2\sum_{i=1}^p\lambda_{it}+\phi\right)\nonumber\\
&&+\sum_{t=2}^T\sum_{i=1}^p\log\left\{\mathcal K_{\sum_{i=1}^p y_{it}+\alpha}\left(\sqrt{\phi\left(2\sum_{i=1}^p\lambda_{it}+\phi\right)}\right)\right\}.
\end{eqnarray}	

The conditional maximum likelihood estimator (CMLE) of $\boldsymbol{\Theta}$ is given by $\widehat{\boldsymbol{\Theta}}=\mbox{argmax}_{\boldsymbol{\Theta}}\ell(\boldsymbol{\Theta})$.
Direct optimization of the conditional likelihood function (\ref{eqn:pmf}) is not straightforward due to its complicated form, particularly due to the presence of modified Bessel functions. Additionally, optimization algorithms are likely to become computationally heavier and more unstable as the number of parameters increases. To overcome these computational difficulties, EM algorithms will be proposed in the following section to perform inference for our multivariate count time series model.

	\section{EM algorithm inference}
	\label{sec:estimation}
Here we describe estimation methods for the MPGIG$_p$-INGARCH model from Section \ref{sec:model}. The direct optimization of the observed conditional log-likelihood from the MPGIG$_p$-INGARCH model tends to be unstable and computationally expensive due to the presence of a large number of parameters and its complicated form. To overcome this problem, two variants of the EM-algorithm are outlined, and these are computationally feasible techniques meant to handle parameter estimation in the low and high-dimensional settings.

Let $\{ {\bf Y}_t \}_{t=1}^{T} = \{ ( Y_{1t} , \ldots , Y_{pt} ) \}_{t=1}^{T} $ denote a random sample of size $T$ from the MPGIG$_p$-INGARCH model, and let ${\bf y}_t$ be a realization of ${\bf Y}_t$ at time $t$. As before, $\boldsymbol{\Theta}$ is the parameter vector. The complete data is composed by $\{{({\bf Y}_t,Z_t})\}_{t=1}^T$, where $Z_1,\ldots,Z_T$ are iid latent $\mbox{GIG}(\phi,\alpha)$ random variables obtained from the stochastic representation of a multivariate Poisson GIG distribution discussed at the beginning of Section 2.  Then, the conditional complete data likelihood can be written as	
\begin{align} \label{eq:complete_data_likelihood}
\mathcal{L}_C (\boldsymbol{\Theta}\mid\boldsymbol{y},\boldsymbol{z}) 
&=\prod\limits_{t=2}^{T} p(\boldsymbol{y}_t,z_t\mid \mathcal{F}_{t-1})\\\notag
&=\prod\limits_{t=2}^{T}\big\{ g\left(z_t;\phi,\alpha\right)\cdot p\left(\boldsymbol{y}_t\mid \mathcal{F}_{t-1},z_t\right)\big\}\\ \notag
&=\prod\limits_{t=2}^{T}\left\{   \frac{z_t^{\alpha-1}}{2\mathcal{K}_\alpha (\phi)}\cdot \exp\left(-\frac{\phi}{2}(z_t+z_t^{-1})\right)    \prod\limits_{i=1}^{p} \cfrac{e^{-z_{t}\lambda_{it}}\left(z_t\lambda_{it}\right)^{y_{it}} }{y_{it}!}\right\}, \notag
\end{align}
where $p(\boldsymbol{y}_t,z_t\mid \mathcal{F}_{t-1})$ denotes the joint density function of $({\bf Y}_t,Z_t)$ given $\mathcal F_{t-1}$, for $t\geq2$. Next, by taking the logarithm of $\mathcal{L}_C$ from \eqref{eq:complete_data_likelihood}, we have the complete data log-likelihood, denoted by $\ell_{C}(\boldsymbol{\Theta})$, written as
\begin{align} \label{eq:log_complete_data_likelihood}
\ell_{C}\left(\boldsymbol{\Theta}\right) &=\sum\limits_{t=2}^{T} \ell_t(\boldsymbol\Theta)= \sum\limits_{t=2}^{T} \log p(\boldsymbol{y}_t,z_t\mid \mathcal{F}_{t-1}) \\ \notag
&=\sum\limits_{t=2}^T\log\left\{ \frac{z_t^{\alpha-1}}{2\mathcal{K}_\alpha (\phi)}\cdot \exp\left(-\frac{\phi}{2}(z_t+z_t^{-1})\right) \right\} + \sum\limits_{t=2}^{T}\sum\limits_{i=1}^{p} \log \left\{\cfrac{e^{-z_{t}\lambda_{it}}\left(z_t\lambda_{it}\right)^{y_{it}} }{y_{it}!} \right\}\\ \notag
&=\underbrace{\alpha\sum\limits_{t=2}^T  \operatorname{log}z_t-(T-1)\cdot\operatorname{log}\mathcal{K}_{\alpha}(\phi)-\cfrac{\phi}{2}\sum\limits_{t=2}^T(z_t+z_t^{-1})}_{=\ell_{C_1}(\phi,\alpha)}\vspace{15pt}   \\ \notag 
\vphantom{\frac{1}{2}} \\[-10pt]
&\hspace{15pt}+\underbrace{\sum\limits_{t=2}^T \sum\limits_{i=1}^p y_{it}\operatorname{log}\left(\lambda_{it}\right)-  \sum\limits_{t=2}^T z_t \sum\limits_{i=1}^p\lambda_{it} }_{=\ell_{C_2}(\boldsymbol{\theta})} \; + \;  \text{const.}\\ \notag
&\propto \ell_{C_1}(\phi,\alpha) + \ell_{C_2}(\boldsymbol{\theta}),\notag
\end{align}

Observe that $\ell_C({\boldsymbol{\Theta}})$ is decomposed as a sum of a function of $(\phi,\alpha)$, plus a function of $\boldsymbol{\theta}$, which makes the optimization procedure easier. We now discuss the E and M steps associated with the EM algorithm.

	\subsubsection*{E-step}
	
	A type-EM algorithm can be derived by replacing the $t$-th term from (\ref{eq:log_complete_data_likelihood}) $\ell_t(\boldsymbol\Theta)$ by a conditional expectation for all $t$. More specifically, the $Q$-function to be optimized is given by
	\begin{eqnarray}\label{Qfunction}
	Q(\boldsymbol\Theta;\boldsymbol\Theta^{(r)})\equiv 	\sum_{t=2}^T E_{\mathcal F_{t-1}}(\ell_t(\boldsymbol\Theta)|{\bf Y}_t;\boldsymbol\Theta^{(r)}),
	\end{eqnarray}	
	where the argument $\boldsymbol\Theta^{(r)}$ means that the conditional expectation is evaluated at the EM estimate at the $r$th iteration of the algorithm, and the underscript $\mathcal F_{t-1}$ meaning that the conditional expectation is taken with respect to the conditional density function $g(z_t|{\bf y}_t)=g({\bf y}_t,z_t)/p({\bf y}_t)$, where $g({\bf y}_t,z_t)\equiv p({\bf y}_t,z_t|\mathcal F_{t-1})$ and $p({\bf y}_t)\equiv p({\bf y}_t|\mathcal F_{t-1})$ is the joint probability function of ${\bf Y}_t$ given $\mathcal F_{t-1}$. The conditional expectations involved in the computation of (\ref{Qfunction}) are derived in the next proposition.

\begin{proposition}
		\label{prop:cond}
		Suppose that $\left\{{\bf Y}_t\right\}_{t\geq1}$ is a $\mbox{MPGIG}_p\mbox{-INGARCH}$ process with associated latent GIG random variables $\{Z_t\}_{t\geq1}$. Then,  $g(z_t|{\bf y}_t)$ is a density function from a three-parameter 		
		$\text{GIG}\left(2\displaystyle\sum_{i=1}^{p}\lambda_{it}+\phi,\phi,\displaystyle\sum_{i=1}^{p}y_{it}+\alpha\right)$	
		 distribution. Furthermore, the conditional expectations involved in (\ref{Qfunction}) are given by
		\begin{align*}
		\zeta_t&=\mathbb{E}_{\mathcal F_{t-1}}\left(Z_t\mid{\bf Y}_t\right)=		
		\left(\dfrac{\phi}{2\sum_{i=1}^{p}\lambda_{it}+\phi}\right)^{1/2}\dfrac{\mathcal{K}_{\sum_{i=1}^{p}y_{it}+\alpha+1}\left(\sqrt{\phi(2\sum_{i=1}^{p}\lambda_{it}+\phi)}\right)}{\mathcal{K}_{\sum_{i=1}^{p}y_{it}+\alpha}\left(\sqrt{\phi(2\sum_{i=1}^{p}\lambda_{it}+\phi)}\right)},\\
		\kappa_t&=\mathbb{E}_{\mathcal F_{t-1}}\left(Z_t^{-1}\mid{\bf Y}_t\right)=	\left(\dfrac{\phi}{2\sum_{i=1}^{p}\lambda_{it}+\phi}\right)^{-1/2}\dfrac{\mathcal{K}_{\sum_{i=1}^{p}y_{it}+\alpha-1}\left(\sqrt{\phi(2\sum_{i=1}^{p}\lambda_{it}+\phi)}\right)}{\mathcal{K}_{\sum_{i=1}^{p}y_{it}+\alpha}\left(\sqrt{\phi(2\sum_{i=1}^{p}\lambda_{it}+\phi)}\right)},\\
		\xi_t&=\mathbb{E}_{\mathcal F_{t-1}}\left(\log Z_t\mid {\bf Y}_t\right)=\dfrac{d}{d\alpha}\log\mathcal{K}_{\sum_{i=1}^{p}y_{it}+\alpha}\left(\sqrt{\phi\left(2\sum_{i=1}^{p}\lambda_{it}+\phi\right)}\right)+\dfrac{1}{2}\log\left(\dfrac{\phi}{2\sum_{i=1}^{p}\lambda_{it}+\phi}\right).		
		\end{align*}
\end{proposition}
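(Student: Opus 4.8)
The plan is to split the proof into two stages: first identify the conditional law of $Z_t$ given $\mathbf{Y}_t$ by inspecting the $z_t$-kernel of the joint density, and then read off all three conditional expectations from a single moment formula for the generalized inverse-Gaussian family. Throughout I would use a generic order symbol $\nu$ for the Bessel index to avoid clashing with the model parameter $\alpha$.

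First, fixing $\mathbf{y}_t$ and regarding the integrand $g(\mathbf{y}_t, z_t)$ of \eqref{eq:complete_data_likelihood} as a function of $z_t$ alone, I would multiply the $\mbox{GIG}(\phi,\alpha)$ density by the Poisson product and discard every factor free of $z_t$. The surviving kernel is
\begin{equation*}
z_t^{\left(\sum_{i=1}^p y_{it} + \alpha\right) - 1} \exp\left\{ -\frac{1}{2}\left[\left(2\sum_{i=1}^p \lambda_{it} + \phi\right) z_t + \phi\, z_t^{-1}\right]\right\}.
\end{equation*}
Comparing with \eqref{densityGIG}, this is precisely the kernel of a $\mbox{GIG}(a,b,\nu)$ law with $a = 2\sum_{i=1}^p \lambda_{it} + \phi$, $b = \phi$, and $\nu = \sum_{i=1}^p y_{it} + \alpha$, which establishes the stated conditional distribution; the normalizing constant is then determined automatically and need not be evaluated separately.

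Second, I would record the general moment identity for $Z \sim \mbox{GIG}(a,b,\nu)$,
\begin{equation*}
E(Z^r) = \left(\frac{b}{a}\right)^{r/2} \frac{\mathcal{K}_{\nu+r}(\sqrt{ab})}{\mathcal{K}_\nu(\sqrt{ab})}, \qquad r \in \mathbb{R},
\end{equation*}
which follows by writing $E(Z^r)$ as an integral against the density and recognizing the integrand as an unnormalized $\mbox{GIG}(a,b,\nu+r)$ density, whose integral equals $2(b/a)^{(\nu+r)/2}\mathcal{K}_{\nu+r}(\sqrt{ab})$. Substituting the conditional parameters above, with $\sqrt{ab} = \sqrt{\phi(2\sum_{i=1}^p \lambda_{it} + \phi)}$, and taking $r = 1$ gives $\zeta_t$, while $r = -1$ gives $\kappa_t$; both agree with the stated expressions.

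For the logarithmic moment $\xi_t = E(\log Z_t \mid \mathbf{Y}_t)$, the key device is $E(\log Z) = \frac{d}{dr} E(Z^r)\big|_{r=0}$, obtained by differentiating $Z^r = e^{r \log Z}$ inside the expectation. Applying this to the moment identity and writing $(b/a)^{r/2} = \exp\{(r/2)\log(b/a)\}$, the derivative at $r=0$ separates into a term $\frac{1}{2}\log(b/a)$ from the prefactor and a term $\frac{d}{d\nu}\log\mathcal{K}_\nu(\sqrt{ab})$ from the Bessel ratio, using $\frac{\partial}{\partial r}\mathcal{K}_{\nu+r} = \frac{\partial}{\partial \nu}\mathcal{K}_\nu$. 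Substituting the conditional parameters recovers the stated formula. The main obstacle I anticipate is justifying the interchange of differentiation and expectation in this last step, together with the well-definedness of the order-derivative of $\mathcal{K}_\nu$; this is dispatched by a dominated-convergence argument exploiting the rapid decay of the GIG density at both $0$ and $\infty$, and it is the only non-mechanical point in the argument.
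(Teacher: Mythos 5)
Your proposal is correct and follows essentially the same route as the paper: identify the $z_t$-kernel of $p(\mathbf{y}_t\mid z_t)\,g(z_t)$ as a three-parameter GIG density, then read off the three conditional expectations from the standard GIG moment and log-moment identities (which the paper simply cites from J{\o}rgensen (1982) rather than deriving as you do). The extra detail you supply in deriving those identities is a harmless elaboration, not a different argument.
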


\begin{proof}[Proof of Proposition \ref{prop:cond}] In what follows, the dependence of quantities on $\mathcal F_{t-1}$ is omitted for simplicity of notation. We have that the conditional density function $g(z_t|\boldsymbol{y}_t)$ is given by
	\begin{align*}
		g(z_t|\boldsymbol{y}_t)&=\dfrac{p(\boldsymbol{y}_t|z_t)\cdot g(z_t)}{p(\boldsymbol{y}_t)}\propto p(\boldsymbol{y}_t|z_t)\cdot g(z_t)\\
		&=\prod_{i=1}^p\dfrac{e^{-\lambda_{it}z_t}(\lambda_{it}z_t)^{y_{it}}}{y_{it}!}		
		\cdot \dfrac{1}{2\mathcal{K}_{\alpha}(\phi)}z_t^{\alpha-1}\exp\left(-\dfrac{\phi}{2}\left(z_t+ z_t^{-1}\right)\right)\\
		&\propto z_t^{\sum_{i=1}^py_{it}+\alpha-1}\exp\left\{-\dfrac{1}{2}\left[\left(2\sum_{i=1}^p\lambda_{it}+\phi\right)z_t+\phi z_t^{-1}\right]\right\},
	\end{align*}
	which is the kernel of a $\text{GIG}\left(2\displaystyle\sum_{i=1}^{p}\lambda_{it}+\phi,\phi,\displaystyle\sum_{i=1}^{p}y_{it}+\alpha\right)$ density function; see (\ref{densityGIG}). 
	
	Now, the conditional expectations stated in the proposition are obtained by using the fact that if $X\sim\mbox{GIG}(a,b,\alpha)$, then
	$E(X^u)=(b/a)^{u/2}\dfrac{\mathcal K_{\alpha+u}(\sqrt{ab})}{\mathcal K_{\alpha}(\sqrt{ab})}$, for $u\in\mathbb R$,	
	and
	$E(\log X)=\dfrac{d}{d\alpha}\log\mathcal K_{\alpha}(\sqrt{ab})+\dfrac{1}{2}\log(b/a)$;	
	for instance, see \cite{Jorgensen1982}.
\end{proof}
	
Using Proposition \ref{prop:cond}, the $Q$-function (\ref{Qfunction}) can be expressed by 
\begin{align} \label{eq:objfunction_split}
Q(\boldsymbol\Theta;\boldsymbol\Theta^{(r)})=Q_1(\phi,\alpha;\boldsymbol\Theta^{(r)})+Q_2(\boldsymbol\theta;\boldsymbol\Theta^{(r)}),
\end{align}  
 where
	\begin{align*}
		Q_1(\phi,\alpha;\boldsymbol{\Theta}^{(r)})&=\alpha\sum\limits_{t=2}^T  \xi_t^{(r)} - (T-1)\operatorname{log}\mathcal{K}_{\alpha}(\phi)-\cfrac{\phi}{2}\sum\limits_{t=2}^T(\zeta_t^{(r)}+\kappa_t^{(r)}),\\ 
		Q_2(\boldsymbol{\theta};\boldsymbol{\Theta}^{(r)}) &= \sum\limits_{t=2}^T \sum\limits_{i=1}^p y_{it}\operatorname{log}\left(\lambda_{it}\right)-\sum\limits_{t=2}^T \zeta_t^{(r)} \sum\limits_{i=1}^p\lambda_{it},
	\end{align*}
and $\zeta_t^{(r)}$, $\kappa_t^{(r)}$ and $\xi_t^{(r)}$ being the conditional expectations from Proposition  \ref{prop:cond} evaluated at $\boldsymbol\Theta=\boldsymbol\Theta^{(r)}$.

	It must be noted that the conditional expectation term $\xi_t$ is not in closed form due to the derivative of the Bessel function with respect to its order, which makes it hard to evaluate the $Q$-function. Even the computation of the other conditional expectations $\zeta_t$ and $\kappa_t$ might be hard to evaluate numerically in some cases due to a possible instability of the Bessel function. To solve these issues, we will employ a Monte Carlo EM (MCEM) method where the conditional expectations in the E-step are calculated via a Monte Carlo approximation. More specifically, for each $t=1,2,\cdots,T$, we first generate $m$ random draws $\left(Z_{t}^{(1)},Z_{t}^{(2)},\cdots,Z_{t}^{(m)} \right)$ from the conditional distribution of $Z_t$ given ${\bf Y}_t$ (see Proposition \ref{prop:cond}), and then we approximate the conditional expectations $\zeta_t$, $\kappa_t$ and $\xi_t$ respectively by $\widehat{\zeta}_t^{(r)}=\dfrac{1}{m}\sum\limits_{i=1}^{m}Z_{t}^{(i)}$, $\widehat{\kappa}_t^{(r)}=\dfrac{1}{m}\sum\limits_{i=1}^{m}\frac{1}{Z_{t}^{(i)}}$, and $\widehat{\xi}_t^{(r)}=\dfrac{1}{m}\sum\limits_{i=1}^{m}\log Z_{t}^{(i)}  $. More details on MCEM can be found in \cite{chan1995}.

	\subsubsection*{M-step}
	
	Since the objective function $Q(\boldsymbol{\Theta};\boldsymbol{\Theta}^{(r)})$ is decomposed into $Q_1(\phi,\alpha;\boldsymbol{\Theta}^{(r)})$ and $Q_2(\boldsymbol{\theta};\boldsymbol{\Theta}^{(r)})$, maximization of $Q(\boldsymbol{\Theta};\boldsymbol{\Theta}^{(r)})$ can be conducted separately as follows:
	\begin{align*}
	(\phi^{(r+1)},\alpha^{(r+1)})&=\operatorname*{argmax}_{\phi,\alpha}Q_1(\phi,\alpha;\boldsymbol{\Theta}^{(r)}),\\
	\boldsymbol{\theta}^{(r+1)}&=\operatorname*{argmax}_{\boldsymbol{\theta}}Q_2(\boldsymbol{\theta};\boldsymbol{\Theta}^{(r)}),
	\end{align*}
where the superscript $(r+1)$ denotes the EM-estimate of parameters at the $r+1$ iteration.
	Due to the lack of closed-form updating formula, a numerical optimization is needed for both maximizations. Maximizing $Q_1(\phi,\alpha)$ is implemented with typical numerical methods such as the \texttt{BFGS} algorithm using the \texttt{optim} function in \textsf{R}, but maximizing $Q_2(\boldsymbol{\theta})$ can seriously increase the computational burden since many parameters are to be updated. To alleviate this problem, 
	we borrow the idea of the generalized EM (GEM) algorithm \citep{Dempster1977}, 
	in which $\boldsymbol{\theta}^{(r+1)}$ is selected so that
		\begin{align}\label{eqn:GEM}
			Q(\boldsymbol{\theta}^{(r+1)}|\boldsymbol{\theta}^{(r)})\geq Q(\boldsymbol{\theta}^{(r)}|\boldsymbol{\theta}^{(r)}).
     	\end{align}
     
The GEM algorithm is attractive when maximization of $Q(\boldsymbol{\theta}^{(r)}|\boldsymbol{\theta}^{(r)})$ is computationally expensive. As a way to achieve (\ref{eqn:GEM}), \cite{lange1995} suggested the EM gradient algorithm where a nested loop in the M-step is simply replaced by one iteration of Newton's method. Because $Q_1 (\phi,\alpha;\boldsymbol{\Theta}^{(r)})$ is just of two dimensions and its derivative with respect to $\alpha$ is not in closed form, we only apply the one-step update to $Q_2 (\boldsymbol{\theta};\boldsymbol{\Theta}^{(r)})$. In other words, we have that
\begin{align*}
&	(\phi^{(r+1)},\alpha^{(r+1)})=\operatorname*{argmax}_{\phi,\alpha}Q_1(\phi,\alpha;\boldsymbol{\Theta}^{(r)}),\\
&	\boldsymbol{\theta}^{(r+1)}=\boldsymbol{\theta}^{(r)}-\textbf{H} (\boldsymbol{\theta}^{(r)})^{-1} \cdot \textbf{S} (\boldsymbol{\theta}^{(r)}),
\end{align*}  
where $\textbf{S}(\boldsymbol{\theta})=\cfrac{\partial Q_2 (\boldsymbol{\theta})}{\partial \boldsymbol{\theta}}=\sum\limits_{t=1}^n \cfrac{\partial\boldsymbol{\nu}_t^{\top}}{\partial\boldsymbol{\theta}}\big(\textbf{Y}_t-z_t \cdot \exp \big(\boldsymbol{\nu}_t \left(\boldsymbol{\theta}\right) \big) \big)$ is the gradient of $Q_2 (\boldsymbol{\theta})$ and $\textbf{H}(\boldsymbol{\theta})$ is its associate hessian matrix. In practice, we use the negative outer product of $\textbf{S}(\boldsymbol{\theta})$ as $\textbf{H}(\boldsymbol{\theta})$ instead of the hessian matrix.

The technical results connecting the EM gradient algorithm with the EM and the GEM algorithms, as discussed in \cite{lange1995}, are not guaranteed to hold in our proposed algorithm. However, we empirically confirmed through extensive simulation studies that the monotonic increase of the observed log-likelihood generally holds in practice, and the computational feasibility is also improved significantly. We refer to our proposed algorithm as the generalized Monte Carlo EM (GMCEM) from now on. We believe that the likehood increasing property deserves further research and could be addressed following ideas from \cite{cafetal2005}. The implementation of the MCEM and GMCEM algorithms are summarized in Algorithm \ref{alg:mcem}.

    \begin{algorithm}[ht!] 
	\SetAlgoLined
	\KwIn{Observed multivariate time series of counts $\{\textbf{Y}_t\}_{t=1}^T$}
	\begin{itemize}
		\item[1.] \textbf{(Initial guess)}: Input a starting value, denoted by $\boldsymbol{\Theta}^{(0)}$.
		\item[2.] \textbf{(MCE-step)}: For each $t=1,\cdots,T$, generate $m$ random draws of $Z_t$, say $(Z_{t}^{(1)},Z_{t}^{(2)},\cdots,Z_{t}^{(m)})$, with  $\boldsymbol{\Theta}=\boldsymbol{\Theta}^{(r)}$, where the superscript $(r)$ denotes the EM estimate of the parameter vector at the $r^{th}$ iteration of the loop, from the conditional distribution given in Proposition \ref{prop:cond}. \\
		Then, compute the approximate conditional expectations\\ $\widehat{\zeta}_t^{(r)}=\dfrac{1}{m}\displaystyle\sum\limits_{i=1}^{m}Z_{t}^{(i)}$, $\widehat{\kappa}_t^{(r)}=\dfrac{1}{m}\displaystyle\sum\limits_{i=1}^{m}\frac{1}{Z_{t}^{(i)}}$, and $\widehat{\xi}_t^{(r)}=\dfrac{1}{m}\displaystyle\sum\limits_{i=1}^{m}\log Z_{t}^{(i)}$. 
		\item[3.] \textbf{(M-step 1)}: Obtain $(\phi^{(r+1)},\alpha^{(r+1)})$ that maximizes $Q_1(\phi,\alpha;\boldsymbol{\Theta}^{(r)})$ with a numerical method.
		\item[4.] \textbf{(M-step 2 for MCEM)}: Obtain $\boldsymbol{\theta}^{(r+1)}$ that maximizes $Q_2(\boldsymbol{\theta};\boldsymbol{\Theta}^{(r)})$ with a numerical method.
		\item[] \textbf{(M-step 2 for GMCEM)}: Calculate $\boldsymbol{\theta}^{(r+1)}=\boldsymbol{\theta}^{(r)}-\textbf{H} (\boldsymbol{\theta}^{(r)})^{-1} \cdot \textbf{S} (\boldsymbol{\theta}^{(r)})$.
		\item[5.] Repeat Steps 2 to 4 until some pre-specified stopping criterion, for instance, $\max\limits_{i}|\boldsymbol{\Theta}_i^{(r+1)}-\boldsymbol{\Theta}_i^{(r)}|<0.001$, is achieved.
	\end{itemize}
	\caption{MCEM and GMCEM algorithms} \label{alg:mcem}
\end{algorithm}

It is well-known that EM algorithm does not automatically yield standard errors for estimates. 
Consequently, several methods have been proposed in the literature to obtain them such as Louis's method \citep{Louis1982} and the supplemented EM algorithm \citep{Meng1991}. Due to the fully parametric nature of our model, we employ a parametric boostrapping technique which is not only intuitive to implement, but is also useful in small and moderate sample size settings. Algorithm \ref{alg:pb} outlines the implementation of a parametric bootstrapping to obtain standard errors of EM estimates for our model.

\begin{algorithm}[ht!]
	\SetAlgoLined
	\KwIn{Parameter estimates, denoted by $\widehat{\boldsymbol{\Theta}}$, calculated from the given data.}
	\begin{itemize}
		\item[1.] \textbf{(Data generation)}: With the estimate vector $\widehat{\boldsymbol{\Theta}}$, generate a new realization of size $T$ using the MPGIG$_p$-INGARCH process from Definition \ref{def:MPGIGINGARCH}.
		\item[2.] \textbf{(Parameter estimation)}: With this generated realization, apply the MCEM/GMCEM from Algorithm \ref{alg:mcem} to obtain the bootstrap parameter estimates.
		\item[3.] Repeat Steps 1 and 2 $B$ times to obtain $\left(\widehat{\boldsymbol{\Theta}}^{(1)},\widehat{\boldsymbol{\Theta}}^{(2)},\cdots,\widehat{\boldsymbol{\Theta}}^{(B)}\right)$, where $B$ is the number of bootstrap replications.
		\item[4.] Compute the empirical standard errors using the $B$ bootstrap estimates from Step 3.
	\end{itemize}
	\caption{Parametric bootstrap to obtain standard errors}  \label{alg:pb}
\end{algorithm}

\subsection{Hybrid estimation method (H-GMCEM)}

When the dimension $p$ of the observed count time series increases, empirical evidence suggests that the MCEM/GMCEM algorithms described in Algorithm \ref{alg:mcem} have significant computational burden. To address this problem, we propose a hybrid version of the GMCEM algorithm referred to as H-GMCEM. The idea is to combine the quasi-maximum likelihood estimation technique and the GMCEM algorithm, and this is seen to reduce the computational burden. 

\noindent For the quasi-maximum likelihood estimation part of this approach, we only assume Poisson marginals without specifying the contemporaneous correlation structure, which is the same approach taken in \cite{fokianos2020}. More precisely, we have that
\begin{align}
\label{quasi}
Y_{it}|\mathcal{F}_{t-1} \sim \text{Poisson}(\lambda_{it})\text{, where }  \log \boldsymbol{\lambda_t} = \boldsymbol{\nu}_t=\textbf{d}^*+\textbf{A}\boldsymbol{\nu}_{t-1}+\textbf{B}\log(\textbf{Y}_{t-1}+\boldsymbol{1}_p),
 \end{align} 
 for $t=2,\cdots,T$, and $i=1,\cdots, p$. It should be noted that $\textbf{d}^*$ from (\ref{quasi}) is not the same as \textbf{d} from Definition \ref{def:MPGIGINGARCH}. Since $E(Y_{it}|\mathcal F_{t-1})=\lambda_{it}\mathcal R_\alpha(\phi)$ under our multivariate model, we obtain the following relationship $\textbf{d}^*=\textbf{d}+\log\mathcal R_\alpha(\phi)({\bf I}-\textbf{A})\boldsymbol{1}_p$.
  The quasi-likelihood and log-likelihood functions in terms of the parameter vector $\boldsymbol{\theta}^*\equiv (\textbf{d}^*,\mbox{vec}(\textbf{A}),\mbox{vec}(\textbf{B}))^\top$ are given by
\begin{align}
\label{quasi_lik}
&\mathcal{L}_{Q}(\boldsymbol{\theta}_0)=\prod\limits^{T}_{t=2}\prod\limits^{p}_{i=1}\left\{\cfrac{e^{-\lambda_{it}}\cdot\lambda_{it}^{y_{it}}}{y_{it}}\right\},\nonumber\\
&\ell_{Q}(\boldsymbol{\theta}_0)=-\sum\limits^{T}_{t=2}\sum\limits^{p}_{i=1}\lambda_{it}+\sum\limits_{t=2}^{T}\sum\limits_{i=1}^p y_{it}\log\lambda_{it}+\text{const.},
\end{align}
where the $\lambda_{it}$'s assume the form given in (\ref{quasi}). In the H-GMCEM algorithm, the parameter matrices $\textbf{A}$ and $\textbf{B}$ from Definition \ref{def:MPGIGINGARCH} are first estimated by maximizing the quasi log-likelihood in (\ref{quasi_lik}).  
We denote these quasi-maximum likelihood estimates of $(\textbf{A},\textbf{B})$ by $(\widetilde{\textbf{A}},\widetilde{\textbf{B}})$. Then, the other parameters are then estimated by the GMCEM method from Algorithm \ref{alg:mcem} by plugging $(\widetilde{\textbf{A}},\widetilde{\textbf{B}})$ into $(\textbf{A},\textbf{B})$; see Algorithm \ref{alg:h-gmcem}. Finite sample comparisons based on the performances and computing times of the GMCEM and the H-GMCEM estimators are discussed in Section \ref{sec:simulations}, especially with regards to increasing dimension $p$.\\

    \begin{algorithm}[ht!]
	\SetAlgoLined
	\KwIn{Observed multivariate time series of counts $\{\textbf{Y}_t\}_{t=1}^T$}
	%\KwOut{Maximum likelihood estimates for $r$ and $p$}
	\begin{itemize}
		\item[1.] \textbf{(Initial guess)}: Input the starting value parameter vector, denoted by $\boldsymbol{\Theta}^{(0)}$.
		%Equation number
		\item[2.] \textbf{(QMLE-step)}: Obtain the quasi-maximum likelihood estimates $\widetilde{\textbf{A}}$ and $\widetilde{\textbf{B}}$ by\\ maximizing the quasi-log-likelihood given in (\ref{quasi_lik}).
		\item[3.] \textbf{(GMCEM-step)}: Implement the GMCEM method from Algorithm \ref{alg:mcem} (with $(\widetilde{\textbf{A}},\widetilde{\textbf{B}})$ plugged in) for estimating $\phi$, $\alpha$ and \textbf{d} with some pre-specified stopping criterion for\\
		the EM algorithm.
	\end{itemize}
	\caption{H-GMCEM algorithm} \label{alg:h-gmcem}
\end{algorithm}

	\section{Simulation studies}
	\label{sec:simulations}
	
	In this section, we conduct extensive Monte Carlo simulation studies to examine the finite-sample performance of the GMCEM and H-GMCEM algorithms proposed in Section \ref{sec:estimation}. Attempts were made to compare the results from the GMCEM method and the direct optimization of the conditional log-likelihood, but the latter was computationally expensive, especially for dimensions higher than 3. 
	
	For each simulation scheme described below, 500 replications were generated and the corresponding boxplots of the estimates are presented. For Schemes 1-3, sample sizes $T=200,500,1000$ are considered, and for Schemes 4-6, sample sizes $T=500,1000$ are taken up. The adopted stopping criterion for the EM-algorithms was $\max\limits_{i}|\boldsymbol{\Theta}_i^{(r+1)}-\boldsymbol{\Theta}_i^{(r)}|<0.001$, with the maximum number of iterations set to be 500. We generate trajectories from the MPGIG$_p$-INGARCH model under six scenarios as specified in what follows. 
	\begin{itemize}
		\item Scheme 1 ($p=2$) \\
		$\phi=0.5$, $\alpha=1.5$,
		$\mathbf{\rm d} = \begin{bmatrix}
		0 \\
		0 
		\end{bmatrix}$, 
		$\mathbf{A} = \begin{bmatrix}
		0.3 &0 \\
		0 & 0.25 
		\end{bmatrix}$, 
		$
		\mathbf{B}=\begin{bmatrix}
		0.4 & 0\\
		0 & 0.3
		\end{bmatrix}$.
		\item Scheme 2 ($p=2$)\\
		$\phi=0.5$, $\alpha=1.5$,
		$\mathbf{\rm d} = \begin{bmatrix}
		0 \\
		1 
		\end{bmatrix}$, 
		$\mathbf{A} = \begin{bmatrix}
		0.3 &0 \\
		0 & 0.25 
		\end{bmatrix}$, 
		$
		\mathbf{B}=\begin{bmatrix}
		0.4 & 0\\
		0 & 0.3
		\end{bmatrix}$.
		\item Scheme 3 ($p=2$)\\
		$\phi=0.5$, $\alpha=-1.5$,
		$\mathbf{\rm d} = \begin{bmatrix}
		1 \\
		1 
		\end{bmatrix}$, 
		$\mathbf{A} = \begin{bmatrix}
		0.3 &0 \\
		0 & 0.25 
		\end{bmatrix}$, 
		$
		\mathbf{B}=\begin{bmatrix}
		0.4 & 0\\
		0 & 0.3
		\end{bmatrix}$.
	\end{itemize}

	\begin{itemize}
		\item Scheme 4 ($p=4$)\\
		$\phi=0.5$, $\alpha=1.5$,
		$\mathbf{\rm d} = \begin{bmatrix}
		0.5 \\
		0.5\\
		1\\
		0.5
		\end{bmatrix}$, 
		$\mathbf{A} = \begin{bmatrix}
		0.35 &0 &0&0\\
		0 &-0.3 &0&0\\
		0 &0 &0.4&0\\
		0 &0 &0&-0.3
		\end{bmatrix}$, 
		$
		\mathbf{B}=\begin{bmatrix}
		-0.3 &0 &0&0\\
		0 &0.3 &0&0\\
		0 &0 &-0.3&0\\
		0 &0 &0&0.4
		\end{bmatrix}$.
		
		\item Scheme 5 ($p=4$)\\
		$\phi=0.5$, $\alpha=1.5$,
		$\mathbf{\rm d} = \begin{bmatrix}
		0.5 \\
		0.5\\
		1\\
		0.5
		\end{bmatrix}$, 
		$\mathbf{A} = \begin{bmatrix}
		0.35 &-0.2 &0&0\\
		0 &-0.3 &0&0\\
		0 &0 &0.4&0\\
		0 &0 &0.2&-0.3
		\end{bmatrix}$, 
		$
		\mathbf{B}=\begin{bmatrix}
		-0.3 &0.2 &0.0&0\\
		0 &0.3 &0&0\\
		0 &0 &-0.3&0\\
		0 &0&-0.25&0.4
		\end{bmatrix}$.
		\item Scheme 6 ($p=10$)\\
		$\phi=0.5$, $\alpha=1.5$,
		$\textbf{d}=\mbox{diag}(0,0,0,0,1,0,0.8,0.5,1,0.8)$, \\
		$\mathbf{A}=\mbox{diag}(0.30,0.20,0.35,0.40,-0.20,0.30,-0.15,0.15,-0.25,-0.10)$, \\
		$\mathbf{B}=\mbox{diag}(0.30,0.35,0.30,0.20,-0.20,0.40,-0.20,0.25,-0.15,-0.20)$.
	\end{itemize}
	
	Note that for Schemes 5 and 6, we assume that the matrices \textbf{A} and \textbf{B} are either diagonal or band-limited, which is a reasonable assumption when relevant information is given or when the dimension is large and the matrices are sparse. 
	
	In Figures \ref{fig:Bivar}-\ref{fig:Scheme6_constr_Q},  we present boxplots of the estimates obtained with the proposed GMCEM method. It can be seen that the bias and variance of the estimators decrease as the sample size increases in all simulation schemes, which empirically indicates the consistency of the proposed EM-algorithm based estimators. Note that for Scheme 6, in Figures \ref{fig:Scheme6_constr} and \ref{fig:Scheme6_constr_Q}, we present the estimation results for both the GMCEM and the H-GMCEM methods.

	Another issue is the running time for estimating the parameters since the number of parameters is proportional to $p^2$, as mentioned earlier.  To this end, we report a summary of computing times for each simulation scheme in Table \ref{tab:computing}. For Schemes 1-3, it takes less than a minute in most cases. When it comes to the larger dimensional case, i.e., Scheme 6, it is seen that placing a constraint on $\mathbf{A}$ and $\mathbf{B}$ greatly reduces the computational burden as expected. In Scheme 6, the improvements in the running time of the H-GMCEM method when compared to the GMCEM method is evident. 
	
	\begin{table}[ht!]
		\centering
		\begin{tabular}{c|ccc|ccc|ccc}
			\hline
			\multicolumn{1}{l}{$ $} & \multicolumn{3}{c}{Scheme 1}                                                       & \multicolumn{3}{c}{Scheme 2}                                                       & \multicolumn{3}{c}{Scheme 3}                                                       \\ \hline
			$T$	 & 200 & 500 & 1000 & 200 & 500 & 1000 & 200 & 500 & 1000 \\ \hline
			1st Qu. & 12.60 & 22.50 & 30.90 & 19.50 & 33.60 & 32.90 & 5.20 & 10.20 & 19.60 \\ 
			Median & 16.90 & 30.30 & 50.50 & 27.90 & 56.90 & 91.80 & 7.70 & 12.90 & 23.30 \\ 
			3rd Qu. & 22.10 & 41.30 & 70.50 & 40.50 & 79.30 & 146.70 & 17.90 & 18.30 & 28.90 \\ \hline
		\end{tabular}		
		\begin{threeparttable}
			\centering
			\begin{tabular}{c|cc|cc|cc|cc|cc|cc}
				\hline
				\multicolumn{1}{l}{} & \multicolumn{2}{c}{Scheme 4} & \multicolumn{2}{c}{Scheme 4$^c$}   & \multicolumn{2}{c}{Scheme 5} & \multicolumn{2}{c}{Scheme 5$^c$}& \multicolumn{2}{c}{Scheme 6$^c$} & \multicolumn{2}{c}{Scheme 6$^*$}    \\ \hline
				$T$& 500 & 1000 & 500 & 1000 & 500 & 1000 & 500 & 1000 & 500 & 1000 & 500 & 1000 \\ 
				\hline
				1st Qu. & 40.90 & 68.10 & 8.40 & 16.20 & 69.60 & 114.00 & 13.70 & 24.10 & 13.40 & 25.10 & 1.60 & 3.50 \\ 
				Median & 53.40 & 84.30 & 10.40 & 20.10 & 87.20 & 146.70 & 24.50 & 42.40 & 17.50 & 31.60 & 2.30 & 3.60 \\ 
				3rd Qu. & 72.40 & 111.50 & 12.60 & 24.10 & 110.40 & 175.90 & 36.90 & 67.60 & 20.60 & 40.10 & 2.40 & 5.30 \\ 
				\hline
			\end{tabular}
			\begin{tablenotes}\footnotesize
				\item [c]GMCEM method with band-limiting constraints on $\textbf{A}$ and $\textbf{B}$.
				\item [*]H-GMCEM method: hybrid estimation using EM algorithm and quasi-likelihood method.	
			\end{tablenotes}
		\end{threeparttable}
		\caption{Descriptive analysis of the computing times (in seconds) of the GMCEM and H-GMECM methods for each simulation scheme.}	\label{tab:computing}
	\end{table}

	\begin{sidewaysfigure}[htbp]
		\centering
		\includegraphics[width=9.5in,height=7in]{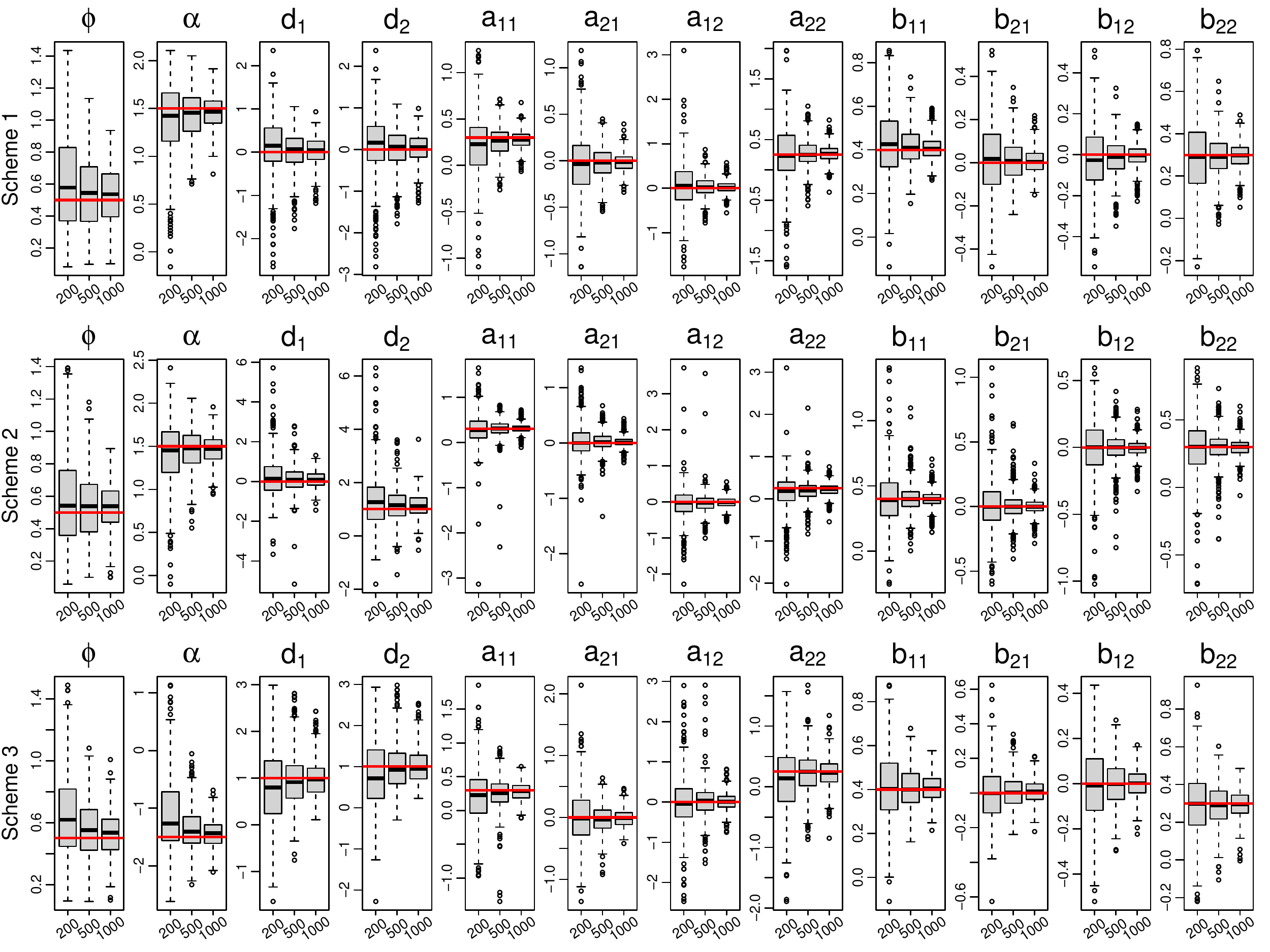}
		\caption{Boxplots of GMCEM and H-GMCEM estimates. Solid red lines represent the true parameter values.}
		\label{fig:Bivar}
	\end{sidewaysfigure}
	
	\begin{sidewaysfigure}[htbp]
		\centering
		\includegraphics[width=9.5in,height=7in]{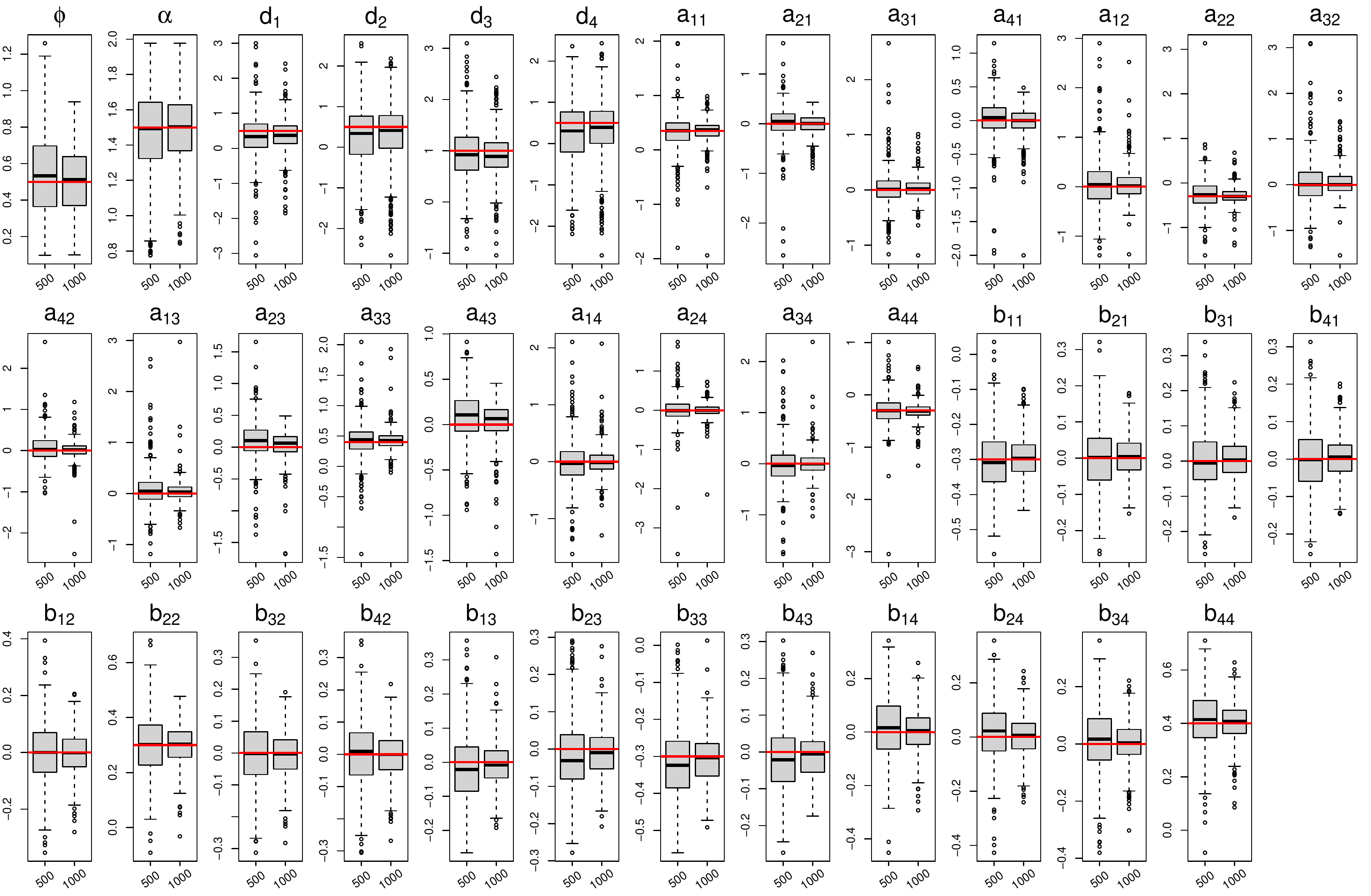}
		\caption{Boxplots of GMCEM estimates for Scheme 4. Solid red lines represent the true parameter values.}
		\label{fig:Scheme4}
	\end{sidewaysfigure}
	
	\begin{sidewaysfigure}[htbp]
		\centering
		\includegraphics[width=9.5in,height=7in]{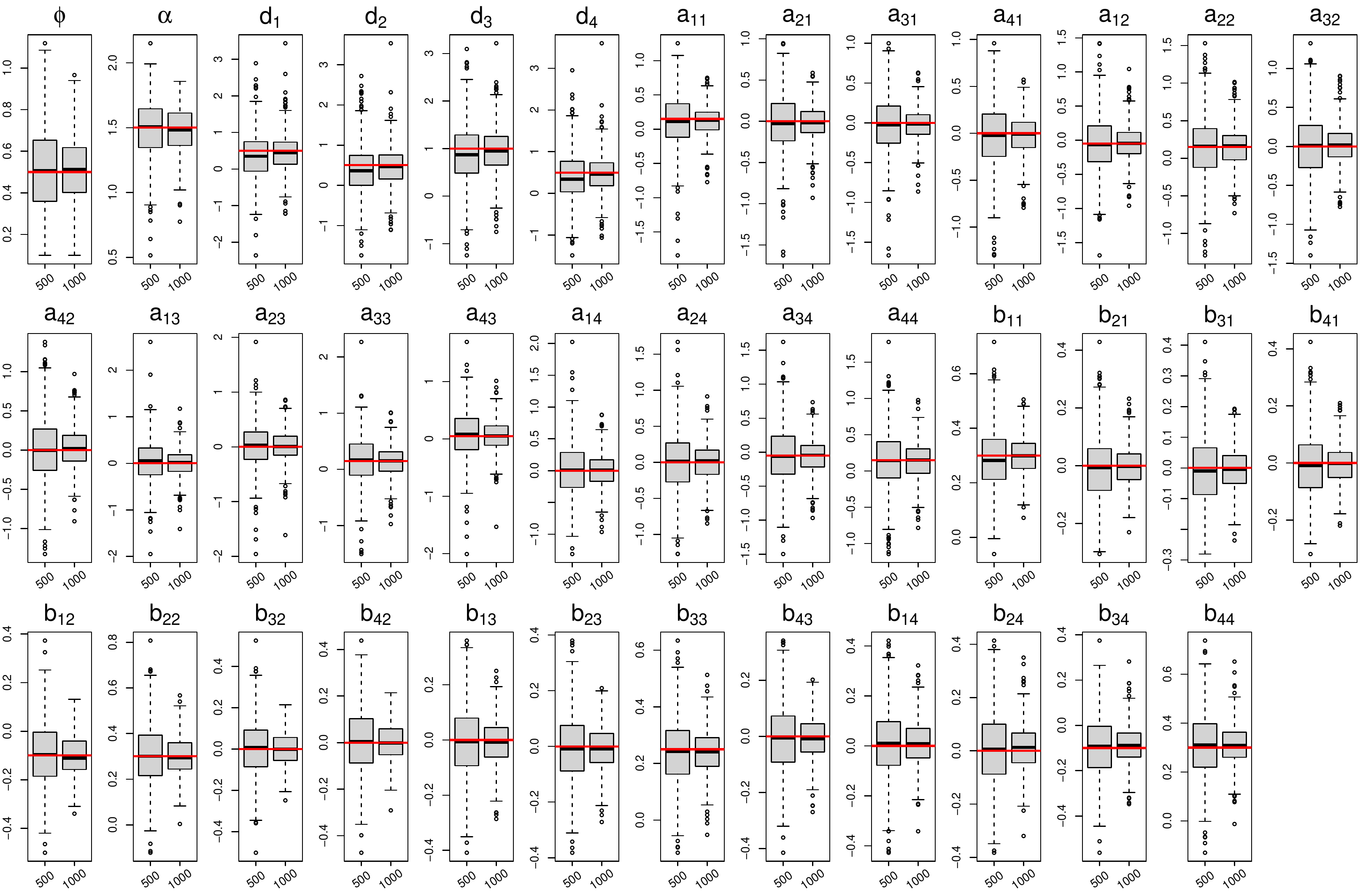}
		\caption{Boxplots of GMCEM estimates for Scheme 5. Solid red lines represent the true parameter values. No diagonal/band constraints on  $\textbf{A}$ and $\textbf{B}$.}
		\label{fig:Scheme5}
	\end{sidewaysfigure}

	\begin{sidewaysfigure}[htbp]

		\centering
		\includegraphics[width=9.5in,height=2.2in]{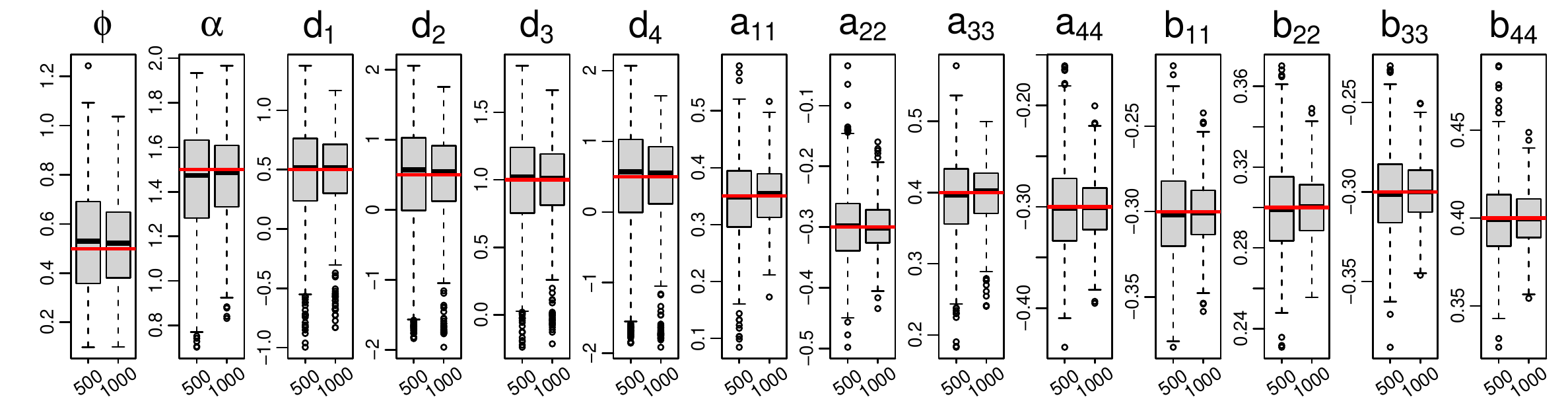}
		\caption{Boxplots of ML estimates for Scheme 4.  Matrices $\textbf{A}$ and $\textbf{B}$ are assumed to be diagnoal.}
			\label{fig:Scheme45_const}
		\centering
		\includegraphics[width=9.5in,height=4.4in]{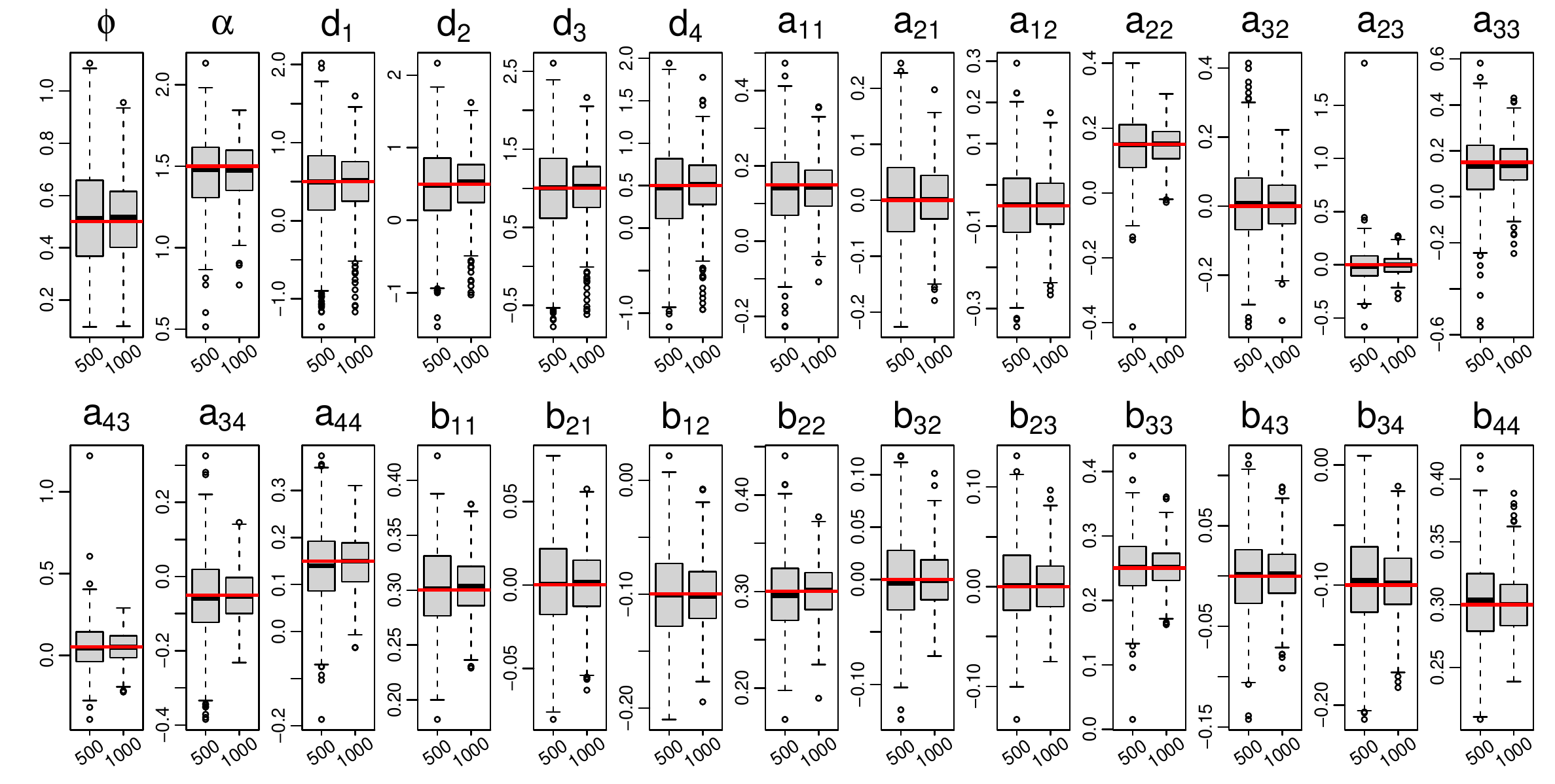}
		\caption{Boxplots of GMCEM estimates for Scheme 5. Matrices  $\textbf{A}$ and $\textbf{B}$ are assumed to be a band-limited.}
\end{sidewaysfigure}

%	\begin{figure}
%		\centering
%		\includegraphics[width=7in,height=4.7in]{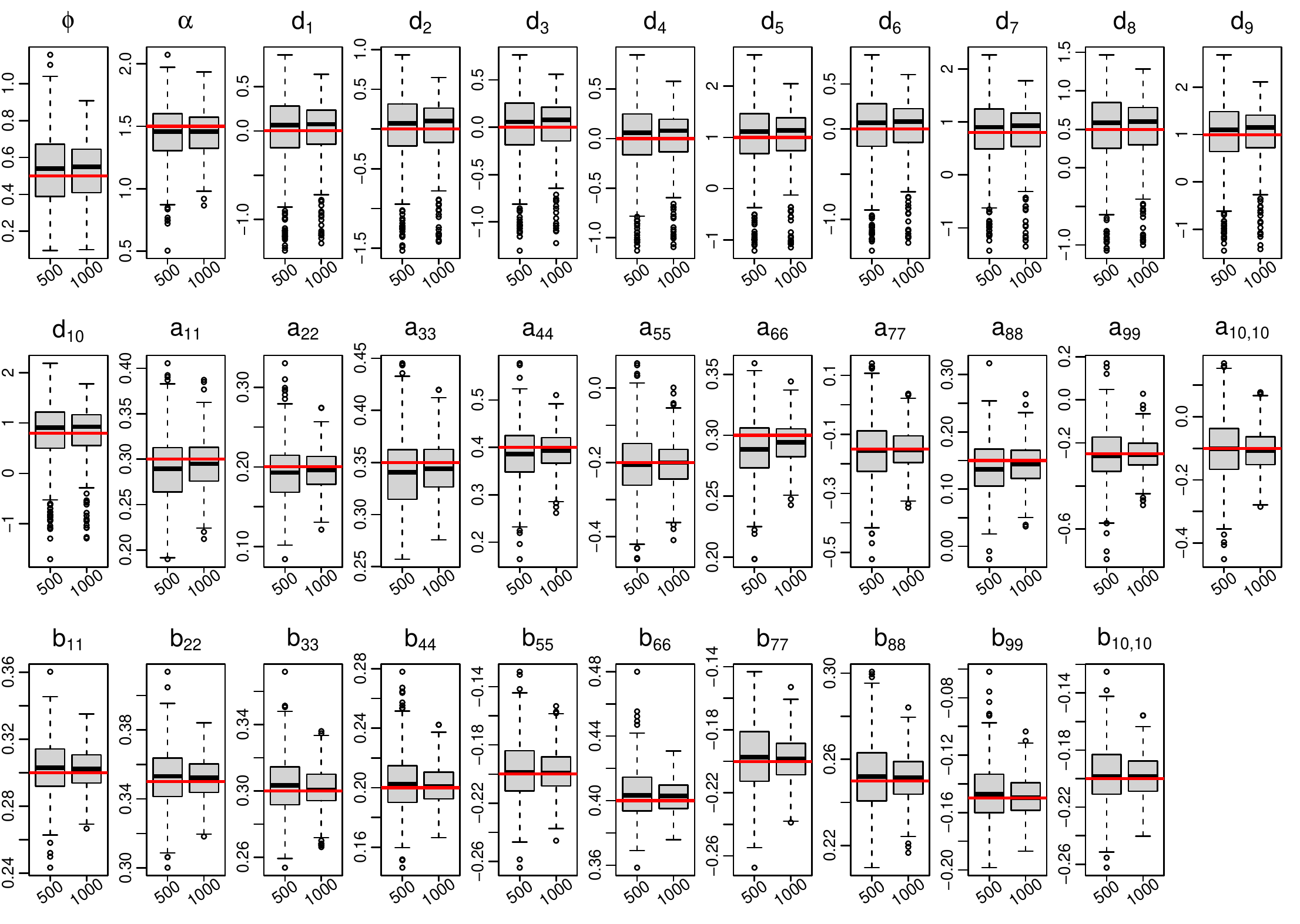}
%		\caption{Box plots of GMCEM estimates for Scheme 6.  Matrices  $\textbf{A}$ and $\textbf{B}$ are assumed to be diagonal.}
%		\label{fig:Scheme6_constr}
		%\end{figure}	
		
		%\begin{figure}
%		\centering
%		\includegraphics[width=7in,height=4.7in]{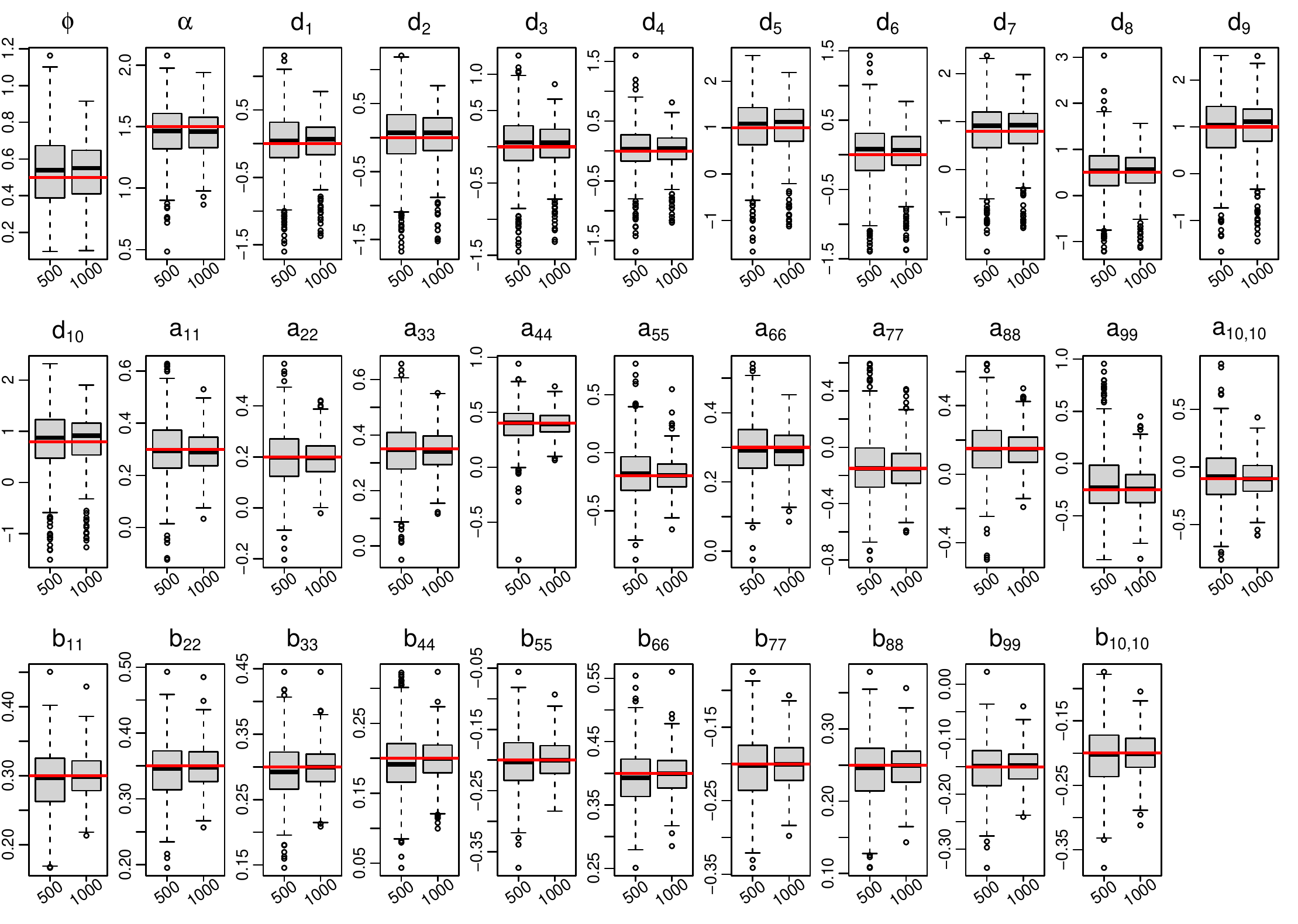}
%		\caption{Box plots of H-GMCEM method estimates for Scheme 6.  Matrices  $\textbf{A}$ and $\textbf{B}$ are assumed to be diagonal.}
%		\label{fig:Scheme6_constr_Q}
%	\end{figure}	

	\begin{sidewaysfigure}[htbp]
	\centering
	\includegraphics[width=9.5in,height=7in]{Scheme6_constr.pdf}
	\caption{Boxplots of GMCEM estimates for Scheme 6.  Matrices  $\textbf{A}$ and $\textbf{B}$ are assumed to be diagonal.}
	\label{fig:Scheme6_constr}
\end{sidewaysfigure}
	
	\begin{sidewaysfigure}[htbp]
	\centering
	\includegraphics[width=9.5in,height=7in]{}
	\caption{Boxplots of H-GMCEM method estimates for Scheme 6.  Matrices  $\textbf{A}$ and $\textbf{B}$ are assumed to be diagonal.}
	\label{fig:Scheme6_constr_Q}
\end{sidewaysfigure}

 	\section{Application to modeling monthly crime data}
	\label{sec:application}

	In this section, we present an application of the proposed method in modeling monthly number of drug-related crimes in the state of New South Wales (NSW), Australia. Model adequacy checks are done using probability integral transformation (PIT) plots, and comparisons with the existing multivariate Poisson-INGARCH model by \cite{fokianos2020} are addressed.

We consider monthly counts of cannabis possession-related offenses in the Greater Newcastle (GNC) and Mid North Coast (MNC) regions of NSW from January 1995 to December 2011. This yields a bivariate count time series with a total of 204 observations, which are presented in Figure \ref{fig:tsplot}. Also, Figure \ref{fig:map} provides a geographical plot indicating the two Australian regions MNC and GNC. For each of the two count time series, we inspect the presence of heavy-tailed behavior using the tail index (TI) \citep{Qian2020} defined as $\mbox{TI}=\widehat{\gamma}-\widehat{\gamma}_{nb}$, where $\widehat{\gamma}_{nb} = \dfrac{2\widehat{\sigma}^2-\widehat{\mu}}{\widehat{\mu}\widehat{\sigma}}$ is the index with respect to the negative binomial distribution, and $\widehat{\gamma}$, $\widehat{\sigma}^2$, and $\widehat{\mu}$ represent the sample skewness, variance, and mean, respectively. A positive value of $\mbox{TI}$ indicates a heavier tail compared to the negative binomial distribution. The estimated tail index values for the two count time series are 0.334 (MNC) and 0.271 (GNC). Therefore, we have evidence of heavy-taildness on this dataset.  The sample autocorrelation function (ACF) plots, provided in Figure \ref{fig:acf}, indicate that the observed series of counts in both regions involve significant temporal correlations along with a hint of seasonality with a period of roughly 12 months. Further, the cross-correlation function (CCF) plot at the bottom of Figure \ref{fig:acf} shows presence of serial and contemporaneous  cross-correlation between the two series. Thus, a multivariate analysis is required rather than two separate univariate analyses.

	\begin{figure}[ht!]
		\centering
		\includegraphics[width=15cm]{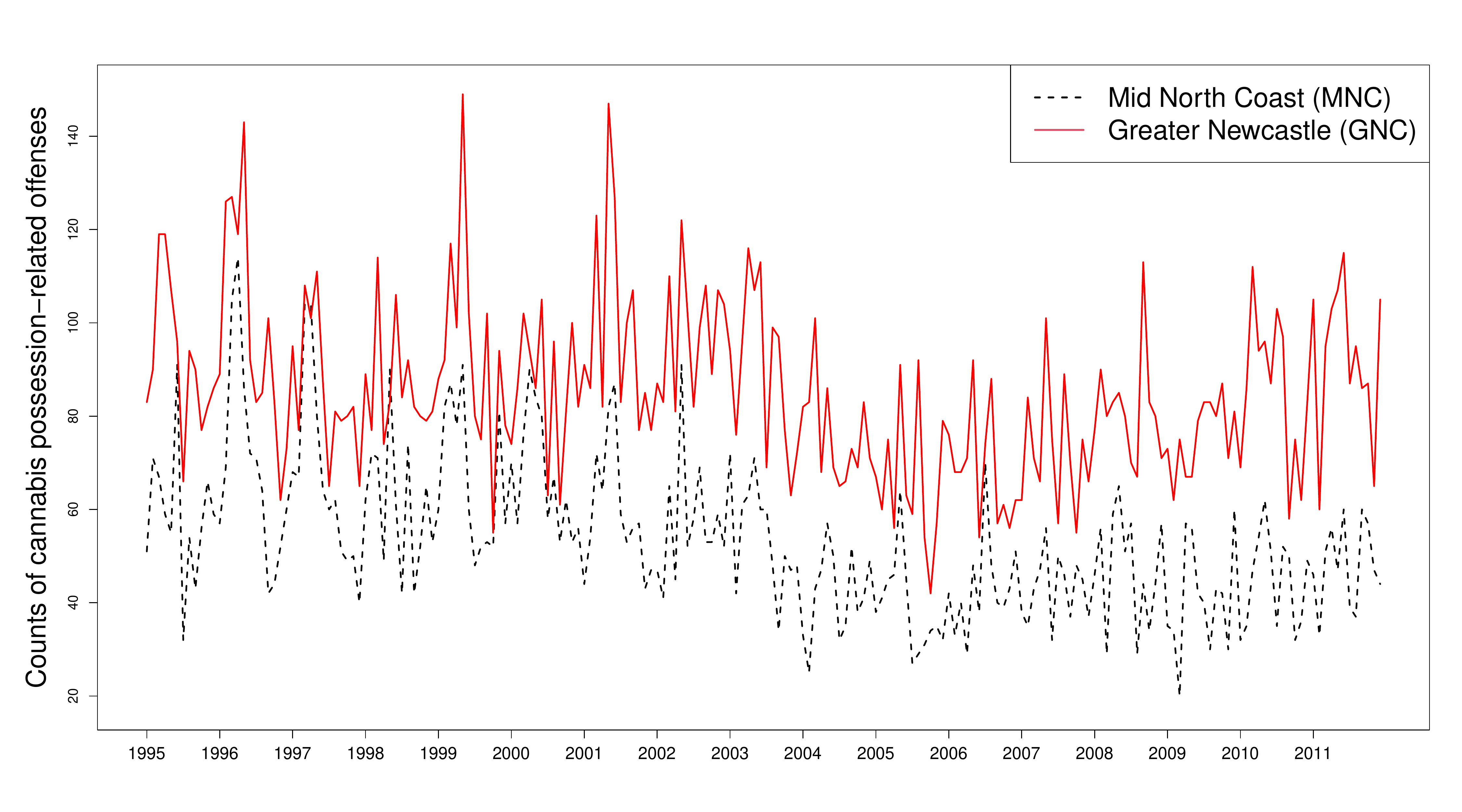}
		\caption{Time series plot of monthly counts of cannabis possession-related offenses in the two regions of MNC and GNC during the period January 1995 to December 2011.}
		\label{fig:tsplot}
				\includegraphics[width=10cm]{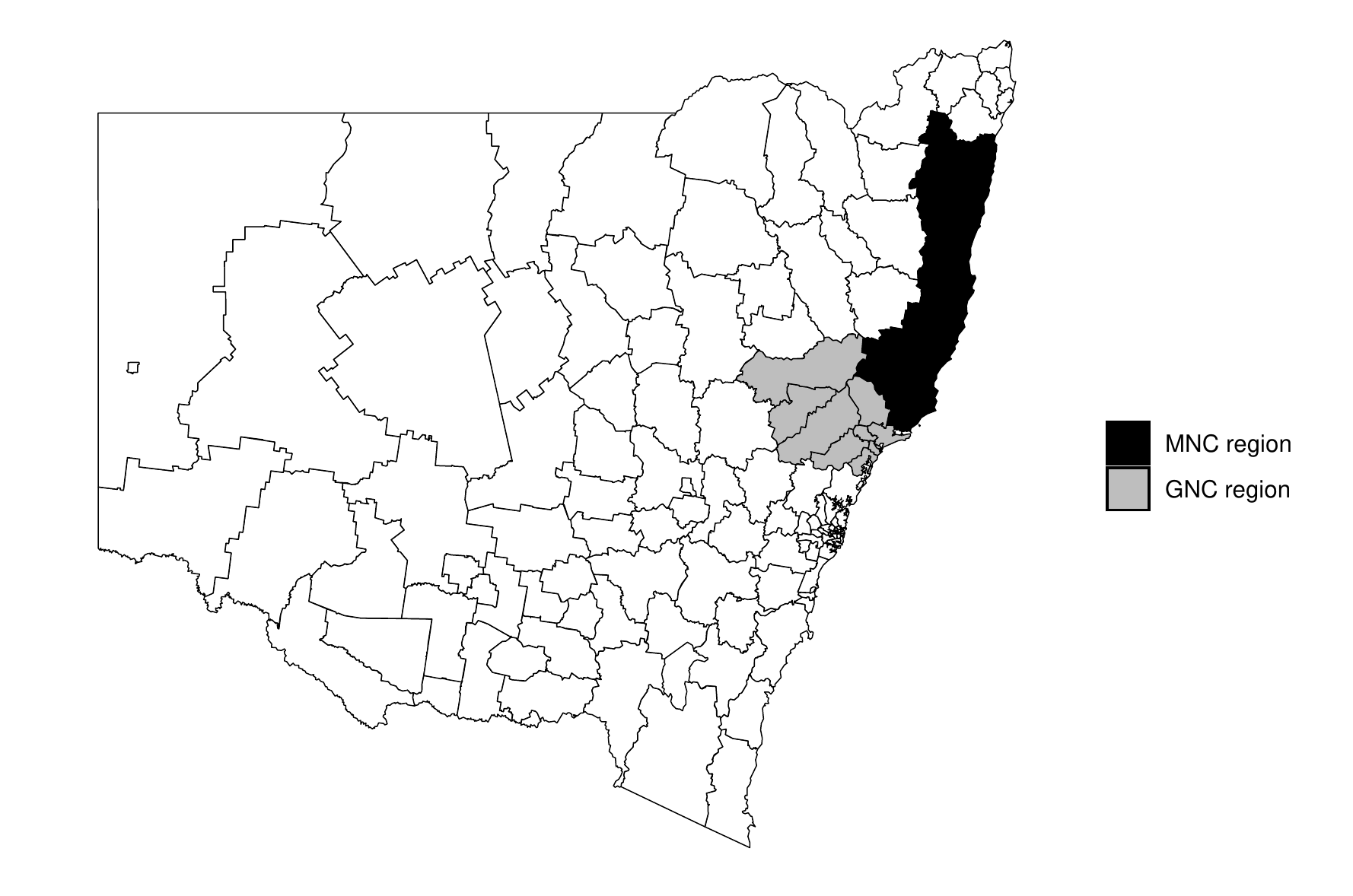}
		\caption{Map of the state of New South Wales (Australia) with the MNC and GNC regions highlighted in black and gray, respectively.}
		\label{fig:map}
	\end{figure}

	\begin{figure}[ht!]
		\centering
	\includegraphics[width=15cm]{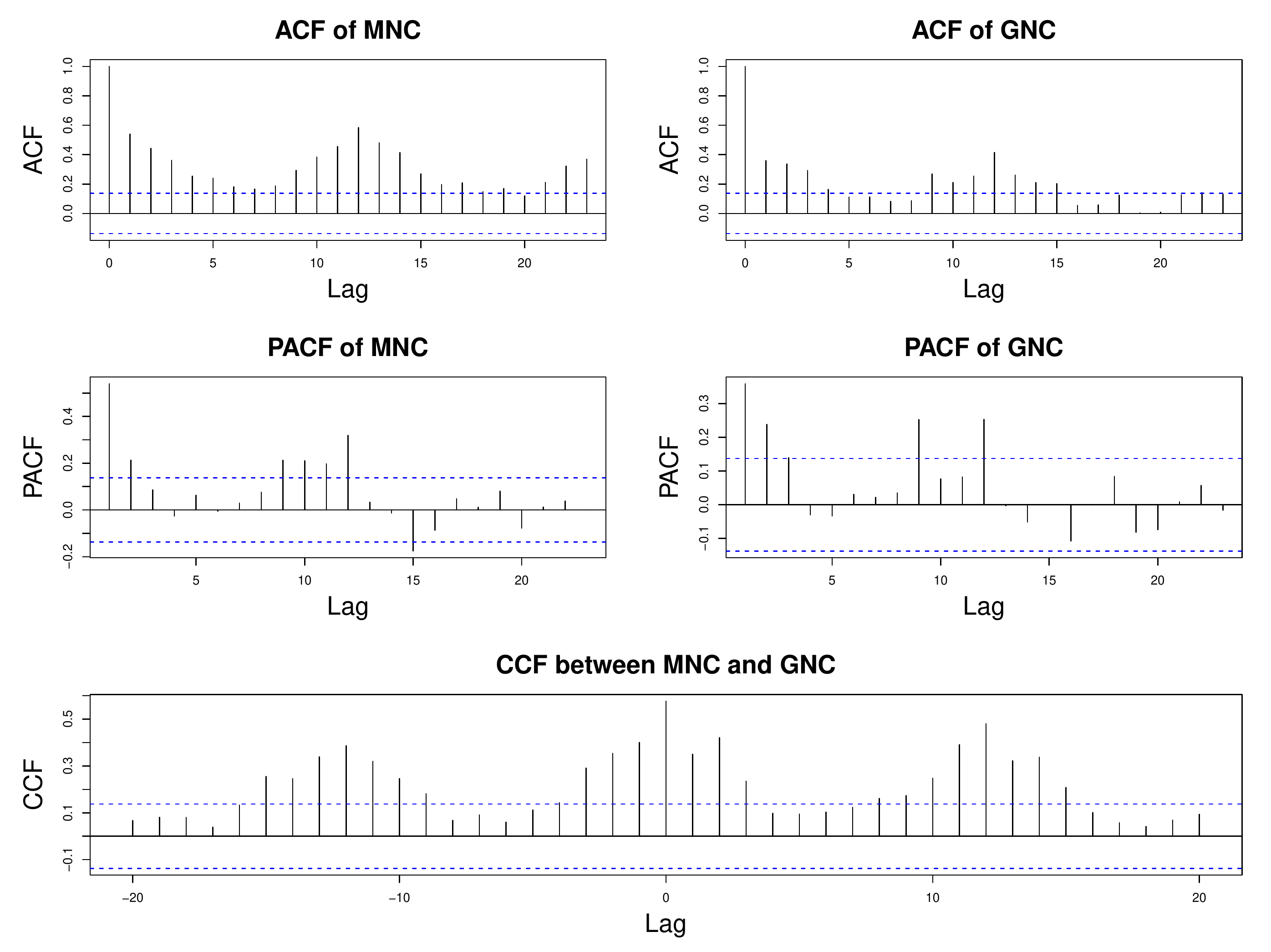}
	\caption{ACF (top), PACF (middle) and CCF (bottom) plots for monthly counts of cannabis possession-related offenses in the two regions of MNC and GNC.} 	\label{fig:acf}
   \end{figure}

	To adjust for the annual seasonal effect, the MPGIG$_p$-INGARCH model in Definition \ref{def:MPGIGINGARCH} is extended to include additional time lags into the mean equation from  (\ref{eqn::meanprocess}), which leads to the following formulation:
	\begin{align} \label{eq:extended_mean_equation}
		\boldsymbol{\nu}_{t} = \textbf{d}+\sum_{j\in I_1}\textbf{A}_{(j)}\boldsymbol{\nu}_{t-j} +\sum_{k\in I_2}\textbf{B}_{(k)}\log(\textbf{Y}_{t-k} +\boldsymbol{1}_p),\text{ for }t\geq 1,
	\end{align}
where $I_1$ and $I_2$ are index sets of time lags for past means and past observations, respectively. The ACF plots in Figure \ref{fig:acf} indicate seasonality with a period of approximately 12 months, and this prompts the inclusion of a time lag of 12 months in the mean equation in \eqref{eq:extended_mean_equation}. As outlined in Table \ref{tab:modelselection}, the MPGIG$_p$-INGARCH model that only considers time lags of 1 and 12 months with respect to the past observations (i.e. Model 4) is selected as the final model according to both Akaike information criteria (AIC) and Bayesian information criteria (BIC). 
	
	Table \ref{tab:estimate} presents the parameter estimates obtained using the MCEM algorithm described in Algorithm \ref{alg:mcem}, and the standard errors computed using a parametric bootstrap approach, described in Algorithm \ref{alg:pb}, with 500 bootstrap replications. The histogram of the bootstrap estimates of the parameters are provided in Figure \ref{fig:pb}. It can be observed that they appear to track close to the shape of a normal distribution.

	\begin{table}[ht!]
		\centering
		\begin{tabular}{c|c|c|c|c|c}
			\hline
			\multirow{2}{*}{Model}  &\multicolumn{2}{c|}{Time lag index sets}  &\multirow{2}{*}{BIC} &\multirow{2}{*}{AIC}& \multirow{2}{*}{\# of parameters}\\ \cline{2-3}
			&$I_1$&$I_2$& && \\
			\hline
			1 &1&1          &1731.242 & 1691.484& 12\\
			2 &1&1, 12      &1622.824 & 1570.704 & 16\\
			3 &1, 12& 1, 12 &1640.821 & 1575.671 & 20\\
			4 &$\cdot$ &1, 12 &\textbf{1604.377} & \textbf{1565.287}& \textbf{12}\\
			\hline
		\end{tabular}
		\caption{Comparison between candidate models using AIC and BIC.}
		\label{tab:modelselection}
	\end{table}

\begin{table}[ht!]
	\centering
	\begin{tabular}{c|c|cc|cc}
		\hline
		$\phi$ &$d_1$ & $b_{11(1)}$  &$b_{12(1)}$   &$b_{11(12)}$&$b_{12(12)}$\\ \hline
 \textbf{49.110} & \textbf{0.365} & \textbf{0.263} & \textbf{0.329} & \textbf{0.084} & \textbf{0.202} \\
 {\small (7.255)} & {\small (0.422)} & {\small (0.057)} & {\small (0.058)} & {\small (0.061)} & {\small (0.065)} \\
{\small $(38.996 , 66.033)$} & {\small $(-0.068 , 1.52)$} & {\small $(0.125 , 0.34)$} & {\small $(0.196 , 0.422)$} & {\small $(-0.035 , 0.202)$} & {\small $(0.048 , 0.302)$} \\ \hline
$\alpha$ & $d_2$&$b_{21(1)}$&$b_{22(1)}$ &$b_{21(12)}$&$b_{22(12)}$ \\ \hline
$\boldsymbol-$\textbf{1.158} & \textbf{1.911} & \textbf{0.046} & \textbf{0.069} & \textbf{0.218} & \textbf{0.253} \\
{\small (0.225)} & {\small (0.398)} & {\small (0.034)} & {\small (0.036)} & {\small (0.063)} & {\small (0.062)} \\
{\small $(-1.237 , -0.357)$} & {\small $(1.609 , 3.066)$} & {\small $(-0.032 , 0.102)$} & {\small $(0.002 , 0.138)$} & {\small $(0.069 , 0.319)$} & {\small $(0.078 , 0.314)$} \\ \hline
	\end{tabular}
	\caption{Parameter estimates for the MPGIG$_2$-INGARCH model with time lags of 1 and 12 months (Model 4) based on the MCEM algorithm. Standard errors and $95\%$ confidence intervals obtained via a parametric bootstrap are provided in the second and third lines, respectively.} 
	\label{tab:estimate}
\end{table}

    \begin{figure}[ht!]
    	\centering
		\includegraphics[width=16cm]{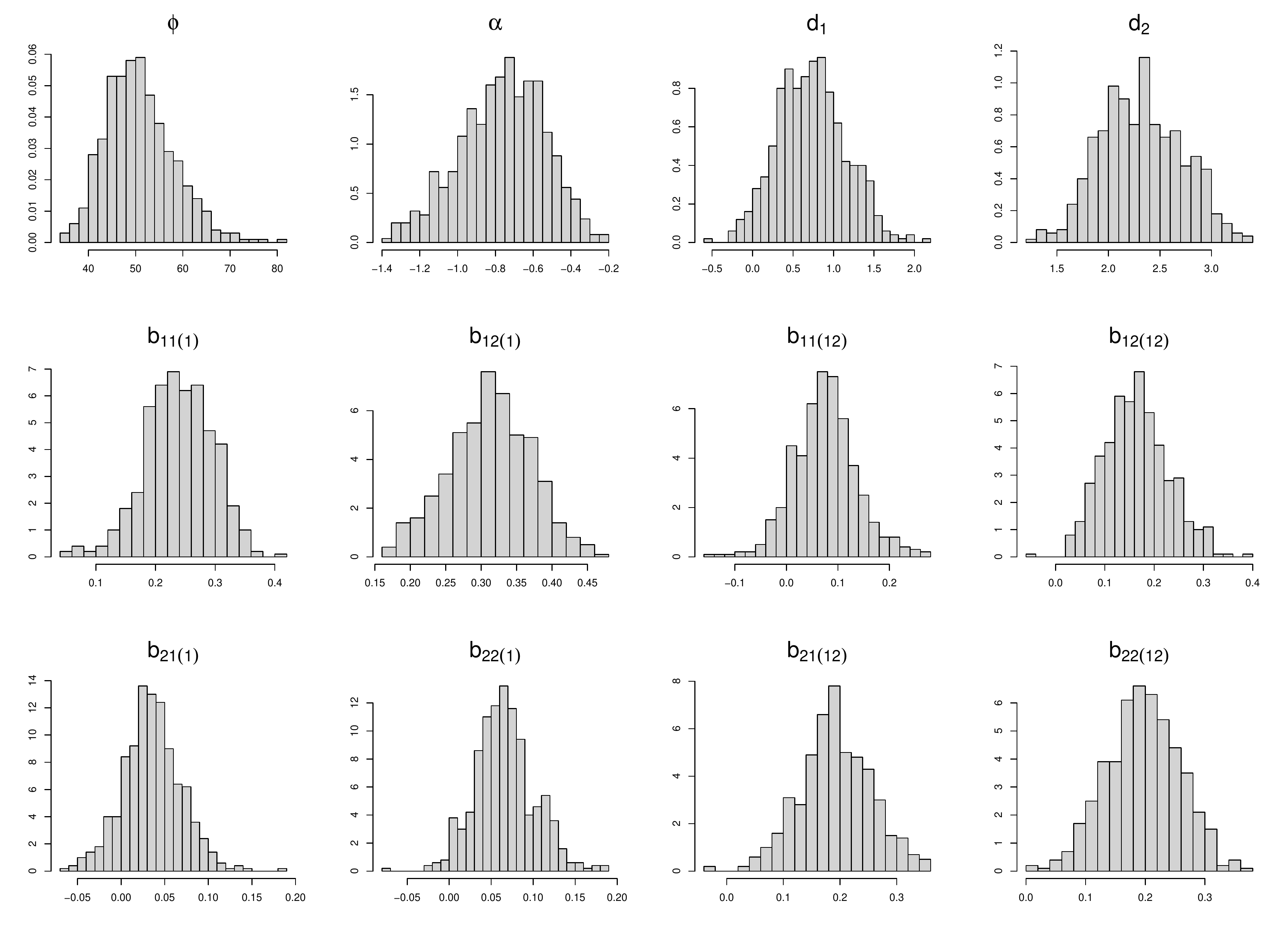}
		\caption{Bootstrap distribution of the parameter estimators from the MPGIG$_2$-INGARCH model model. Estimators are obtained using the MCEM algorithm.}
		\label{fig:pb}
	\end{figure}
	
	%Dignostics
	ACF and PACF plots of Pearson residuals from the fitted model is presented in Figure \ref{fig:resid} to assess the goodness-of-fit. At each time lag greater than 0, both the auto and cross-correlations fall within the confidence intervals, indicating that the fitted mean process adequately accounts for the temporal dependence structure. For checking the contemporaneous correlation structure, a $95\%$ confidence interval for the contemporaneous correlation parameter $\phi$ was computed using the parametric bootstrap technique. The estimated $\widehat{\rho}$ (=0.577) falls well within the confidence interval of (0.533,0.738), which further supports the selected model.
	
As our final model adequacy check, the non-randomized probability integral transformation (PIT) plot  \citep{Czado2009} is obtained for each count time series based on our approach and that one by \cite{fokianos2020}; see Figure \ref{fig:pit}. Compared to the multivariate Poisson-INGARCH model proposed by \cite{fokianos2020}, the PIT plots for our model are much closer to the uniform distribution, implying a much better and adequate goodness-of-fit. This establishes the advantage and ability of our proposed MPGIG$_p$-INGARCH model in being able to handle heavy-tailed behavior in multivariate count time series data. 
	
	\begin{figure}[ht!]
		\centering
		\includegraphics[width=15cm]{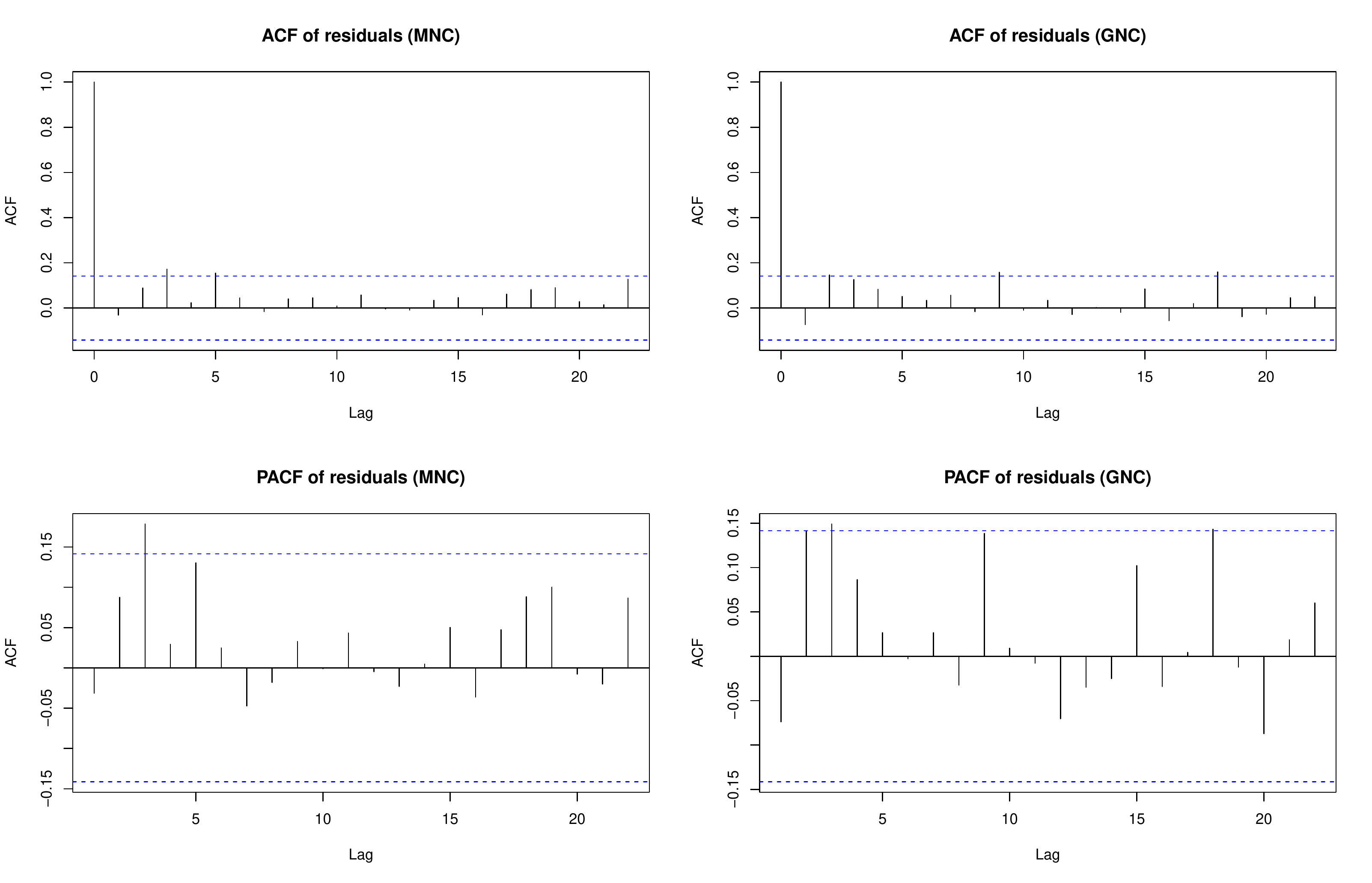}
		\caption{ACF and PACF of residuals from the fitted model.}
		\label{fig:resid}
	\end{figure}

	\begin{figure}[ht!]
	\centering
	\includegraphics[width=13cm]{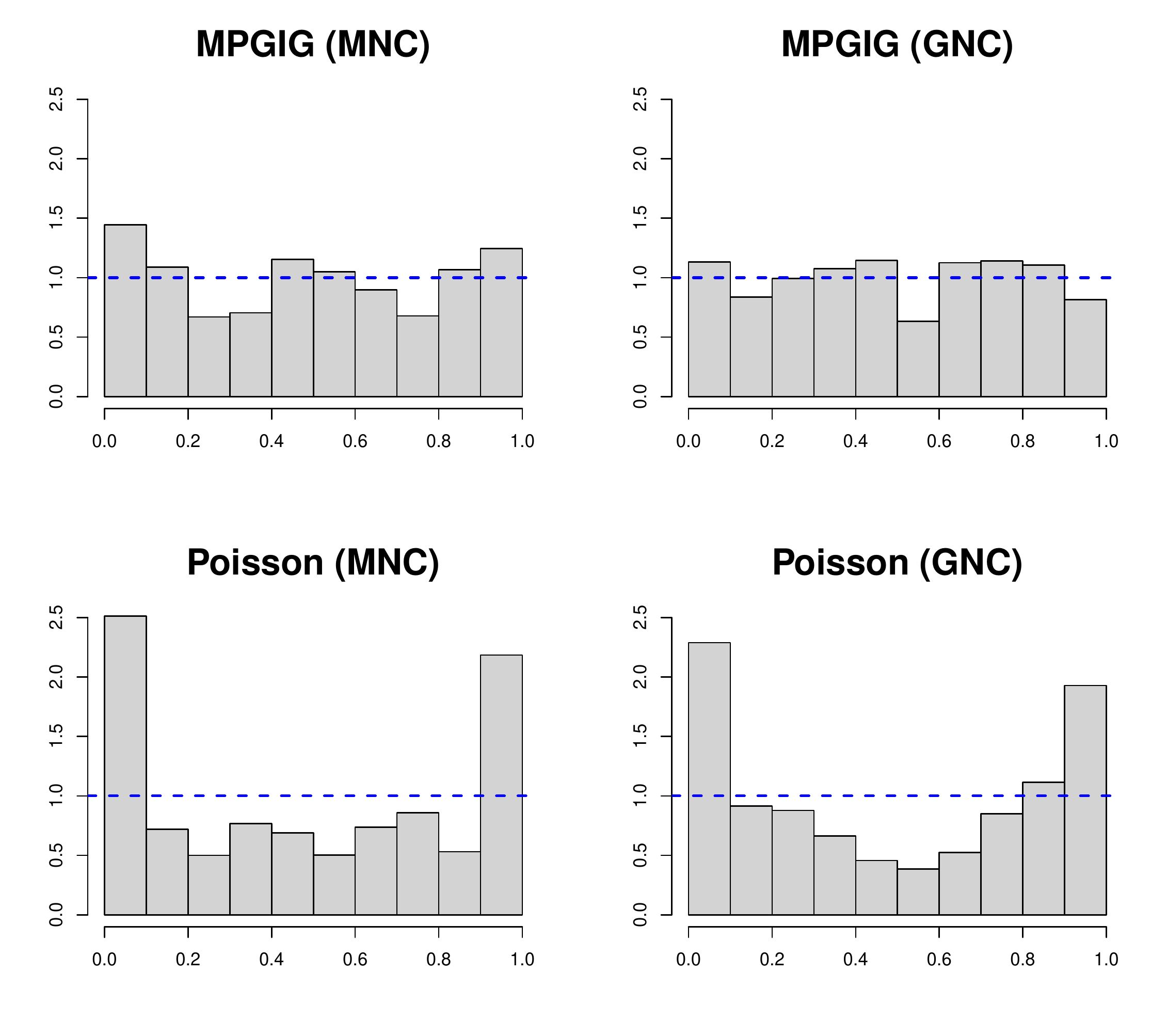}
	\caption{Probability integral transformation (PIT) plots of the proposed MPGIG$_p$-INGARCH model (top), and the multivariate Poisson-INGARCH model from \cite{fokianos2020} (bottom) for each count time series.}
	\label{fig:pit}
\end{figure}

	\section{Conclusion}
	\label{sec:conclusion}
In this article, we proposed a new multivariate count time series model that is capable of handling heavy-tailed behavior in count processes.  The new model, referred to MPGIG$_p$-INGARCH, was defined using a multivariate version of the INteger-valued Generalized AutoRegressive Conditional Heteroscedastic (INGARCH) model. The setup involved a latent generalized  inverse-Gaussian (GIG) random variable that controls contemporaneous dependence in the observed count process.	

A computationally feasible variant of the expectation-maximization (EM) algorithm, referred to as GMCEM, was developed. Aiming at cases when the dimension of the observed count process is large, a variant of the GMCEM algorithm that combines the quasi-maximum likelihood estimation method and the GMCEM method was also outlined. This variant, called H-GMCEM algorithm, is seen to have improved computing time performance in finite sample cases. In the many finite-sample simulation schemes that are considered, the proposed GMCEM and H-GMCEM methods shows increasing accuracy with increasing sample size. 

An application of the proposed method in modeling bivariate count time series data on cannabis possession-related offenses in Australia was explored. Empirical evidence indicated that this bivariate count time series exhibits heavy-tailed behavior and we illustrated, through suitable  model adequacy checks, that the proposed MPGIG$_p$-INGARCH model effectively handles such types of multivariate count time series data. 

Some points deserving future research are the inclusion of covariates, model extension allowing for general orders, and \texttt{R} implementation of our method with a general class of mixed Poisson distributions. 

\section*{Code and data availability}
The dataset considered in this work and the computer code for implementing our method in \textsf{R} are made available on \textsf{GitHub} at \url{https://github.com/STATJANG/MPGIG_INGARCH}.

\singlespacing

\end{document}